\def\R{\mathbb{R}}
\def\row#1#2{{#1}_1,\ldots ,{#1}_{#2}}
\def\2vec#1#2{\left(\begin{array}{c}{#1}\\{#2}\end{array}\right)}
\newtheorem{theorem}{Theorem}
\newtheorem{corollary}{Corollary}
\newtheorem{lemma}{Lemma}
\newtheorem{proposition}{Proposition}
\newtheorem{example}{Example}
\newtheorem{definition}{Definition}
\journal{Mathematical Social Sciences}
\begin{document}

\begin{frontmatter}
\title{\bf Growth of Dimension in Complete Simple Games}
\author{\bf Liam O'Dwyer and Arkadii Slinko}
\address{The University of Auckland, New Zealand}


\begin{abstract}
The concept of dimension in simple games was introduced by  \cite{taylor93} as a measure of the remoteness of a given game from a weighted game. They demonstrated that the dimension of a simple game can grow exponentially in the number of players. However,  the problem of worst-case growth of the dimension in complete games was left open. \cite{freixas08} showed that complete games of arbitrary dimension exist and, in particular, their examples demonstrate that the worst-case growth of dimension in complete games is at least linear. In this paper,  using a novel technique of  \cite{Kurz2015}, we demonstrate that the growth of dimension in complete games can also be exponential in the number of players. 
\end{abstract}
\begin{keyword}

simple game \sep complete simple game \sep dimension \sep boolean dimension



\end{keyword}

\end{frontmatter}

\section{Introduction}
The past two decades have witnessed an explosion of interest in computational and representational issues related to coalitional games (see, e.g., \cite{Deng:1994,Ieong:2005}) and simple games, in particular (see, e.g., \cite{deineko06,elkind08,faliszewski09}).
Simple games were introduced in their present form by  \cite{vonneumann44} for applications in economics but found a wide range of applications across several disciplines. In particular, simple games are  used to model  decision making in committees \citep{peleg2002game}, reliability of real life  systems made from unreliable components \citep{ramamurthy2012coherent} and McCulloch-Pitts units in threshold logic \citep{muroga71}. 

A simple game consists of a finite set of players and a set of winning coalitions that satisfies the monotone property asserting that all supersets of a winning coalition are also winning.
One of the most important classes of simple games is the class of \emph{weighted simple games}. In a weighted simple game every player is assigned a non-negative real  weight  so that a coalition is winning if the total weight of its players is at least some predetermined threshold.  
From the computational perspective weighted games are especially important since they can be succinctly represented by a finite sequence of integers \citep{freixas97}. If we allow the weights and the threshold to be vector-valued, then every simple game becomes weighted \citep{taylor93} and the smallest dimension of vectors which makes this representation possible is called the \textit{dimension} of the game. 

\cite{taylor99} demonstrated that the dimension of simple games can grow exponentially in the number of players. Recently \cite{Olsen2016} established the exact nature of this growth which is $2^{n-o(n)}$, where $n$ is the number of players.
 
Another important class of simple games is the class of complete games introduces by \cite{carreras96}. In a complete game it is always possible to say which player among any two players is at least as desirable (as a coalition partner) as another one and, moreover, this desirability relation is a total order. This is a much broader class of games than weighted games which includes, for example, conjunctive and disjunctive hierarchical games which appear as the access structures of popular secret sharing schemes \citep{simmons90,tassa07}. Both disjunctive and conjunctive hierarchical games are seldom weighted \citep{gvozdeva13} or even roughly weighted \citep{Hameed2015}. 

  \cite{freixas08} studied conjunctive hierarchical games (under the name of games with a minimum)  and found that their dimension  grows linearly in the number of players and asked  whether or not in the class of complete games the dimension can grow polynomially or even exponentially. In the present paper we show that the growth of dimension of disjunctive hierarchical games, which are complete, is exponential in the number of players. This, in particular, fully answers the question of Freixas and Puente. 
  
To obtain the aforementioned result we had to find a class of complete games whose dimension grows very fast. We found this growth in the class of disjunctive hierarchical games. We also turned our attention to conjunctive games for two reasons. Firstly, we fixed the gap in the aforementioned result of Freixas and Puente and also removed an unnesessary requirement of absence of veto and dummy players. The gap in their theorem is non-trivial and goes to the heart of the definition of dimension. In a way, this definition is less nice than Freixas and Puente thought. We devoted Section~4 for clarification of this problem.

\section{Preliminaries}
\smallskip
\noindent\textbf{Simple Games.} 
\medskip

Let $P$ be a set consisting of $n$ players.  For convenience $P$ can be taken to be $[n]=\{1,2,\ldots,n\}$.
\begin{definition}
A simple game is a pair $G=(P,W)$, where $W$ is a subset of the power set $2^P$  which satisfies the monotonicity condition: 
\begin{quote}
if $X\in W$ and $X\subset Y\subseteq P$, then $Y\in W$.
\end{quote}
\end{definition}

Elements of the set $W$ are called {\em winning coalitions}. We also define the set $L=2^P\setminus W$ and call elements of this set {\em losing coalitions}. A winning coalition is said to be {\em minimal} if every proper subset of it is a losing coalition. A losing coalition is said to be {\em maximal} if every proper superset of it is winning. Due to monotonicity, every simple game is fully determined by the set of its minimal winning coalitions $W_{\text{min}}$ or the set of maximal losing coalitions $L_{\text{max}}$.\par\medskip

\noindent\textbf{Weighted Simple Games and Criteria of Weightedness.}
\begin{definition}
A simple game $G$ is called a {\em weighted (majority) game} if there exist non-negative reals $\row wn$, and a positive real number $q$, called the quota, such that $X\in W$ iff $\sum_{i\in X} w_i\ge q$. Such a game is denoted $[q;\row wn]$. We also call $[q;\row wn]$ a {\em voting representation} for $G$.
\end{definition}

\begin{example}
Let now $n=2k-1$ be odd and  $W$ be all subsets of $P$ of cardinality  $k$ or greater. There are exactly $2^{n-1}$ elements in $W$. This game is called the simple majority voting game. It is weighted and $[k;1,1,\ldots,1]$ is its voting representation.
\end{example}

A more interesting example is 
\begin{example}
The UN Security Council consists of five permanent and 10 non-permanent members (which are sovereign states). A passage requires approval of at least nine countries, subject to a veto by any one of the permanent members. This is a weighted simple game with a voting representation
\[
[39;7,7,7,7,7,1,1,1,1,1,1,1,1,1,1].
\]
\end{example}

A sequence of coalitions 
\begin{equation*}
\label{tradingtransform}
{\cal T}=(\row Xj;\row Yj)
\end{equation*}
of simple game $G$ is a \textit{ trading transform} of length $j$ if the coalitions $\row Xj$ can be converted into the coalitions $\row Yj$ by rearranging players. It can also be expressed as 
\[
|\{i:a\in X_i\}| = |\{i:a\in Y_i\}|\qquad \text{for all $a\in P$}.
\]
A trading transform $\mathcal{T}$ is called a \textit{certificate of non-weighted\-ness} for $G$ if $\row Xj$ are winning in $G$ and $\row Yj$ are losing. The absence of certificates of non-weightedness of any length is a necessary and sufficient condition of weightedness  of the game $G$ \citep{Elgot,taylor99}. \par\medskip

A more general class of games that we will touch upon is the class of roughly weighted games \citep{gvozdeva13}.

\begin{definition}
\label{roughlyweighted}
A simple game $G$ is called {\em roughly weighted} if there exist non-negative real numbers $\row wn$ and a  real number $q$, called the {\em quota}, not all equal to zero, such that for a coalition $X\in 2^P$ the condition $\sum_{i\in X} w_i< q$ implies $X$ is losing, and $\sum_{i\in X} w_i> q$ implies $X$ is winning.  
\end{definition}

Weighted games and a roughly weighted games both have a system of weights and a threshold. The difference is the treatment of coalitions on the threshold. In the weighted case all of them are winning and in the roughly weighted case both winning and losing coalitions can occur.\par\medskip

Finally a few words about 
duality. 



Let $G=(P,W)$ be a simple game and $L$ be the set of its losing coalitions. We define the game $G^\ast=(P,W^\ast)$ dual to $G$ by setting
\[
W^\ast=\{P\setminus X \mid X\in L\},
\]
i.e., the winning coalitions of $G^\ast$ are complements to the losing coalitions of~$G$.
\par\medskip

\noindent\textbf{Complete and Hierarchical Simple Games.}
\medskip

Given a simple game $G=(P,W)$, 
after \cite{isbell58}, we define a relation $\succeq_G$ on $P$ by setting $i \succeq_G j$ if for every set $X\subseteq P$ not containing $i$ and~$j$ 
\begin{equation}
\label{condition}
X\cup \{j\}\in W \Longrightarrow X\cup \{i\} \in W.
\end{equation}
In such a case we will say that $i$ is at least as {\em desirable} (as a coalition partner) as $j$. 
This relation is reflexive and transitive but not always complete (total) (e.g., see \cite{carreras96}). The corresponding equivalence relation on $[n]$ will be denoted $\sim_{G} $ and the strict desirability relation as $\succ_G$. If this can cause no confusion we will omit the subscript $G$. 

\begin{definition}
A game whose desirability relation is complete is called {\em complete}. 
\end{definition}

\begin{example}
Any weighted game is complete.
\end{example}
Later we will have more examples. Complete simple games are a very natural generalisation of weighted games. This class is much larger, however, so measures of non-weightedness, e.g., the dimension, for such games are important and interesting. \par

In a complete game $G=(P,W)$ the set of players $P$ is partitioned into equivalence classes $P=P_1\cup\ldots\cup P_m$ with respect to $\sim_{G} $. Without loss of generality we will consider that
\begin{equation}
\label{mclasses}
P_1\succ_G P_2\succ_G \ldots \succ_G P_m.
\end{equation}
Such game $G$ is called $m$-{\em partite}.

Any coalition $X\subseteq P$ defines a multiset $\{1^{\ell_1},\ldots, m^{\ell_m}\}$, where $\ell_i$ is the number of elements from $P_i$ in $X$. Due to completeness, the status of a coalition $X$, i.e., whether it is winning or losing, can be deduced from this multiset. The multisets corresponding to winning coalitions will be called {\it models of winning coalitions}. We can also define {\it models of losing coalitions}, respectively.

In a complete game $G=(P,W)$ a winning coalition $X$ is {\em shift-minimal}  if any coalition $(X\setminus \{i\})\cup \{j\}$ is losing for any $i\in X$ and $j\notin X$ such that $i\succ_Gj$, i.e., it ceases to be winning after any replacement of its player with a less desirable one. A losing coalition $Y$ is called {\em shift-maximal} if it becomes winning after a replacement of any player with a more desirable player. A complete simple game is fully defined by the set of its shift-minimal winning coalitions or a set of its shift-maximal losing coalitions.

Suppose now that the set of players $P$ is partitioned into $m$ disjoint subsets $P=\cup_{i=1}^m P_i$  and let $k_1<k_2<\ldots<k_m$ be a sequence of positive integers. Let ${\bf k}=(\row km)$. Then we define the game $H=H_{\exists}(P,{\bf k})$ by setting  the set of winning coalitions to be
\[
W_\exists = \left\{ X\in 2^P\mid \exists i \left(\left|X\cap \left(\cup_{j=1}^i P_i\right)\right|\ge k_i\right) \right\}.
\]
Such a game is called a \textit{disjunctive hierarchical game}. It has $m$ thresholds and one of them must be reached for the coalition to be winning.

Suppose now that the set of players $P$ is partitioned into $m$ disjoint subsets $P=\cup_{i=1}^m P_i$, and let $k_1< \ldots < k_{m-1}\le k_m$ be a sequence of positive integers. Then we define the game $H_{\forall}(P,{\bf k})$ by setting the set of its winning coalitions to be
\[
W_\forall = \left\{ X\in 2^P\mid \forall i \left(\left|X\cap \left(\cup_{j=1}^i P_i\right)\right|\ge k_i\right) \right\}.
\]
Such a game is called a \textit{conjunctive hierarchical game}. It has $m$ thresholds and all of them must be reached for the coalition to be winning.

Both classes of hierarchical games are complete.  \cite{gvozdeva13} give a sufficient and necessary conditions for the game $H_{\forall}(P,{\bf k})$ defined above to be truly  $m$-partite\footnote{For some combinations of parameters the number of equivalence classes may in fact be less than $m$.}. We denote $|P_i|=n_i$, then the following two conditions jointly are necessary and sufficient:
\begin{eqnarray}
\label{k_1le n_1}
k_1&\le& n_1, \\
\label{sizeofk_i}
k_i&<&k_{i-1}+n_i 
\end{eqnarray}
for every $i\in \{2,\ldots, m\}$.
Moreover,  \cite{gvozdeva13} showed that $G$ has veto players if and only if $k_1=n_1$, in which case $P_1$ is the set of veto players, and $G$ has dummy players if and only if $k_{m-1}=k_m$, in which case $P_m$ is the set of dummy players.

We note that these conditions imply 
\begin{equation}
\label{sumofns}
k_i\le n_1+\ldots +n_i-(i-1).
\end{equation}
for all $i=1,\ldots,m$, and 
\begin{equation}
\label{n_i>1}
n_i>1.
\end{equation}
for all $1<i<m$, moreover, it is also true for $i=m$ in the absence of dummy players. 
The first inequality follows from \eqref{sizeofk_i}. 
In particular, it shows that the equation $k_i =  n_1+\ldots +n_i$ can be satisfied only when $i=1$.
The second also follows from \eqref{sizeofk_i} since 
\[
n_i>k_i-k_{i-1}\ge 1
\]
(for $i=m$ this needs $k_m\ne k_{m-1}$, which means no dummies.

\begin{lemma}
\label{reduced}
Let $G=(P,W)$ be an $m$-partite simple game with $P=P_1\cup P_2\cup \ldots \cup P_m$, where $P_1$ consists of veto players and $P_m$ of dummy players. Let $A=P_1\cup P_m$. Then the reduced game $G^A$ is defined on the set of players $P_2\cup \ldots\cup P_{m-1}$ and does not have veto or dummy players).
\end{lemma}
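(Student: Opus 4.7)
The plan is first to pin down the working definition of the reduced game $G^A$: since $P_1$ consists of veto players and $P_m$ of dummies, the natural interpretation (and the only one consistent with a reduction producing the ``same'' game on a smaller player set) is that a coalition $X\subseteq P\setminus A=P_2\cup\ldots\cup P_{m-1}$ is declared winning in $G^A$ exactly when $X\cup P_1\in W$. Monotonicity of $G^A$ is then immediate from monotonicity of $G$: if $X\subseteq Y$ in $P\setminus A$ and $X\cup P_1\in W$, then $Y\cup P_1\supseteq X\cup P_1$ is winning in $G$ as well. This settles that $G^A$ is a well-defined simple game on $P_2\cup\ldots\cup P_{m-1}$.

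For the absence of veto players in $G^A$, I would fix an arbitrary $p\in P_i$ with $2\le i\le m-1$. Because $P_1$ is exactly the set of veto players of $G$, $p$ is not a veto of $G$, so there exists a winning coalition $Y\in W$ with $p\notin Y$. Necessarily $P_1\subseteq Y$, for otherwise the missing veto would make $Y$ losing. Setting $X:=Y\cap(P_2\cup\ldots\cup P_{m-1})$, we have $Y=(X\cup P_1)\cup(Y\cap P_m)$, and because every element of $P_m$ is a dummy in $G$, deleting them preserves winningness, giving $X\cup P_1\in W$. Hence $X$ is a $p$-free winning coalition of $G^A$, so $p$ is not a veto of $G^A$.

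For the absence of dummy players, I would again fix $p\in P_i$ with $2\le i\le m-1$. Since $p$ is not a dummy in $G$, there is a coalition $Y\not\ni p$ such that $Y$ and $Y\cup\{p\}$ differ in winning status in $G$. Any such $Y$ must contain $P_1$, otherwise both $Y$ and $Y\cup\{p\}$ would be losing. Setting $X:=Y\cap(P_2\cup\ldots\cup P_{m-1})$ and again using that the players of $P_m$ are dummies, the status of $Y$ in $G$ equals the status of $X\cup P_1$ in $G$, i.e.\ the status of $X$ in $G^A$; the same argument applied to $Y\cup\{p\}$ gives that $X$ and $X\cup\{p\}$ have different statuses in $G^A$, so $p$ is not a dummy of $G^A$. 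The only real obstacle is bookkeeping: settling the definition of $G^A$ and keeping track of how excising the vetoes and dummies interacts with monotonicity and with the witnessing coalitions; once that is in place, each of the three verifications is a one-line application of the defining properties of $P_1$ and $P_m$.
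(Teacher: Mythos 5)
Your proof is correct, but it takes a genuinely different route from the paper's. The paper treats the lemma as a statement about conjunctive hierarchical games: it invokes Proposition~5 of \cite{gvozdeva13}, which says the reduced game $G^A$ is again an $(m-2)$-partite conjunctive hierarchical game with class sizes $(n_2,\ldots,n_{m-1})$ and thresholds $(k_2-k_1,\ldots,k_{m-1}-k_1)$, and then reads off the absence of veto players from $n_2>k_2-k_1$ and the absence of dummies from $k_{m-2}-k_1<k_{m-1}-k_1$, using the numerical criteria recalled in Section~2 ($k_1=n_1$ iff there are vetoes, $k_{m-1}=k_m$ iff there are dummies). You instead argue directly from the definitions of veto and dummy players and use nothing about hierarchical structure, so your argument applies to any simple game in which $P_1$ is exactly the set of veto players and $P_m$ exactly the set of dummies --- which is actually closer to the lemma's literal hypothesis (``$m$-partite simple game''). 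The trade-off is that the paper's route additionally identifies $G^A$ as a conjunctive hierarchical game with explicit parameters, which yours does not; on the other hand you had to supply the definition of $G^A$ (contract on $P_1$, delete $P_m$), and your choice is the standard one and matches the cited source. One small caveat: your veto and dummy steps require reading ``$P_1$ consists of veto players'' as ``$P_1$ is the full set of veto players'' (and likewise for $P_m$); this is the intended reading, and it is forced in any case because veto players form an up-set and dummies a down-set of the desirability order.
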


\begin{proof}
Due to Proposition~5 of \cite{gvozdeva13} $G_A$ is an $(m-2)$-partite conjunctive hierarchical game with ${\bf n}'=(n_2,\ldots,n_{m-1})$ and with the vector of thresholds
\[
{\bf k}'=(k_2-k_1,\ldots, k_{m-1}-k_1).
\]
Since $n_2>k_2-k_1$ this game does not have veto players and since $k_{m-2}-k_1<k_{m-1}-k_1$ it does not have dummies.\vspace{2mm}
\end{proof}

\noindent\textbf{Definition of the Dimension. The Criterion of Kurz and Napel.}
\medskip

The \emph{dimension} of a simple game $G$ is the minimum dimension of the vectors required to express it as a vector-weighted game \citep{taylor93}.  That is, a simple game has dimension $k$ if it can be represented as a vector-weighted game with weights from $\R^k$, but not with weights from $\R^{\ell}$ for $\ell<k$. 
%
In practice it is more convenient to work with the following equivalent definition.\par\smallskip
%
Let $G_i=(P, W_i)$,  $i=1,\ldots,n$, be simple games on the same set of players~$P$. Then the \emph{intersection} of these games is the simple game $G = (P, W)$, where $W = W_1 \cap W_2 \cap \cdots \cap W_n$. In other words, a coalition is winning in $G$ if and only if it is winning in $G_i$ for each $i= 1, \ldots, n$. We write $G=G_1\wedge \ldots \wedge G_n$ for reasons that will be revealed later.\par\smallskip

\begin{definition}
A simple game has dimension $d$ if it can be represented as the intersection of $d$ weighted games but cannot be represented as the intersection of $\ell$ weighted games for $\ell<d$. We will denote the dimension of $G$ by $\dim(G)$.
\end{definition}


Examples of games of dimension 2 include the United States Federal System and the procedure to amend the Canadian Constitution \citep{taylor99}. \cite{freixas04} showed that the dimension of the European Union Council under the Nice rules had dimension 3. In a recent article \cite{Kurz2015} have found that the revised voting rules of the Council of the European Union (EU Council) mean that a simple game representation of that voting body must have dimension at least 7. This is significantly larger than that of any other known simple game that occurs in the real world.

To calculate the dimension of a game exactly is not an easy task.  \cite{deineko06} proved that the following problem  is NP-hard: given $k$ weighted majority games on the same set of players, decide whether the dimension of their intersection is exactly~$k$.  

To bound the dimension of a game from above the following observation can be used \citep{taylor99}.

\begin{proposition}
\label{Lmax}
The dimension of a simple game $G=(P,W)$ is at most the cardinality $|L_{max}|$ of the set  $L_{max}$ of maximal losing coalitions of~$G$.
\end{proposition}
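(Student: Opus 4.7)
The plan is to exhibit, for each maximal losing coalition, a single weighted game whose unique maximal loser is exactly that coalition, and then to show that the intersection of these $|L_{max}|$ weighted games is $G$.

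Concretely, for each $M\in L_{max}$, I would define a weighted game $G_M=(P,W_M)$ by assigning weight $0$ to every player in $M$ and weight $1$ to every player in $P\setminus M$, with quota $q=1$. Then $X\in W_M$ iff $X\not\subseteq M$, so the losing coalitions of $G_M$ are precisely the subsets of $M$, and $M$ itself is the unique maximal losing coalition of $G_M$. In particular, every winning coalition of $G$ is winning in $G_M$ (since a winning coalition of $G$ cannot be contained in the losing coalition $M$), so $W\subseteq W_M$.

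The key step is then to verify that $W=\bigcap_{M\in L_{max}} W_M$. The inclusion $W\subseteq \bigcap_M W_M$ is immediate from the previous paragraph. For the reverse inclusion, suppose $X\notin W$, i.e., $X$ is losing in $G$. By monotonicity, $X$ is contained in some maximal losing coalition $M^\ast\in L_{max}$, so $X\subseteq M^\ast$ and hence $X\notin W_{M^\ast}$, which gives $X\notin \bigcap_M W_M$. Combining the two inclusions, $G$ is expressed as the intersection of $|L_{max}|$ weighted games, so by the definition of dimension $\dim(G)\le |L_{max}|$.

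There is no real obstacle here; the only subtle point worth checking is that each $G_M$ is genuinely a \emph{weighted} game in the sense of Definition~2 (non-negative weights and positive quota, with the threshold treated as winning), which is straightforward for the weights chosen. The construction also makes it clear why the bound is generally far from tight: different maximal losers can often be separated by a single weighted game, so tightness of this bound is not expected in typical examples.
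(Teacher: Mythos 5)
Your proof is correct: the paper states this proposition without proof (citing Taylor and Zwicker), and your construction --- one weighted game per maximal losing coalition $M$, with weight $0$ on $M$, weight $1$ off $M$, and quota $1$, so that the losers of $G_M$ are exactly the subsets of $M$ --- is precisely the standard argument behind that citation. The only degenerate case not covered is $L_{max}=\emptyset$ (every coalition, including the empty one, winning), which is conventionally excluded and does not affect the result.
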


To bound the dimension of a game from below we will use the following useful criterion, which is Observation~1 in  \cite{Kurz2015}. It is so important that we call it a theorem.

\begin{theorem}
 \label{kurz}
Let $G = (P,W)$ be a simple game, and let $S=\{\row Yk\}$ be a set of losing coalitions such that for each pair $\{Y_i, Y_j\}$ with $i \neq j$, there is no weighted simple game for which every coalition in $W$ is winning but $Y_i$ and $Y_j$ are both losing. Then the dimension of $G$ is at least $|S|=k$.
\end{theorem}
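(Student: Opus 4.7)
The plan is a short pigeonhole argument by contradiction. Suppose, toward a contradiction, that $\dim(G) = d < k$. By the definition of dimension as intersection of weighted games, there exist weighted simple games $G_1 = (P, W_1), \ldots, G_d = (P, W_d)$ with $W = W_1 \cap W_2 \cap \cdots \cap W_d$. In particular, every coalition winning in $G$ is winning in each $G_\ell$, so each $G_\ell$ is a weighted game in which every coalition of $W$ is winning.

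Next I would attach to each losing coalition a ``witness'' index. For each $Y_j \in S$, the coalition $Y_j$ is losing in $G$, so $Y_j \notin W = \bigcap_\ell W_\ell$; hence there exists at least one index $\ell \in \{1, \ldots, d\}$ such that $Y_j$ is losing in $G_\ell$. Pick one such index and call it $f(Y_j)$. This defines a function $f \colon S \to \{1, \ldots, d\}$.

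Since $|S| = k > d$, the pigeonhole principle yields two distinct losing coalitions $Y_i, Y_j \in S$ with $f(Y_i) = f(Y_j) = \ell$. By the choice of $f$, both $Y_i$ and $Y_j$ are losing in $G_\ell$, while every coalition in $W$ is winning in $G_\ell$. This exhibits a weighted simple game in which every coalition in $W$ is winning and the pair $\{Y_i, Y_j\}$ is jointly losing, directly contradicting the hypothesis of the theorem. Therefore $\dim(G) \ge k$.

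There is no real obstacle here: the only point requiring a moment's care is the observation that the intersection representation $G = G_1 \wedge \cdots \wedge G_d$ forces $W \subseteq W_\ell$ for every $\ell$, which is immediate from $W = \bigcap_\ell W_\ell$. Everything else is definitional and the counting step is pigeonhole.
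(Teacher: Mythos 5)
Your proof is correct and is exactly the standard pigeonhole argument behind this criterion; the paper itself gives no proof, citing it as Observation~1 of Kurz and Napel (2015), and your argument reconstructs that observation's reasoning faithfully. The one step needing care --- that $W=\bigcap_\ell W_\ell$ forces $W\subseteq W_\ell$ for each $\ell$, so each $G_\ell$ qualifies as a weighted game in which every coalition of $W$ wins --- is handled explicitly, so there is nothing to add.
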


Kurz and Napel refer to these elements of $S$ as \emph{pairwise incompatible}. One way to use this theorem to prove that a simple game $G$ has dimension at least $k$ is to find, for every pair  $\{Y_i, Y_j\}\subseteq S$, a certificate of non-weightedness  $(X^1_{i,j}, X^2_{i,j}; Y_i,Y_j)$, where $X^1_{i,j},$ and $X^2_{i,j}$ are both winning in~$G$.

\section{The Main Results}
\smallskip

\noindent\textbf{Dimension of Disjunctive Hierarchical Games.} 
\medskip

Firstly, let us consider a non-weighted example of a disjunctive hierarchical game with the `smallest' possible vector ${\bf k}$. This would be ${\bf k}=(2,4)$ (as in non-trivial cases we have $k_1\ge 2$ and the games with ${\bf k}=(2,3)$ are weighted \citep{gvozdeva13}). Although not weighted, it is known \citep{gvozdeva13}, that the game with this set of parameters is always roughly weighted.

\begin{proposition}
Let $d \geq 2$ be a positive integer.
Let $P=P_0\cup P_1$ with $|P_0|=d$, $|P_1|=2d$, and ${\bf k}=(2,4)$. 
Then the disjunctive hierarchical game $H=H_\exists(P,{\bf k})$  has dimension at least $d$.
\end{proposition}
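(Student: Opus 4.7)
The plan is to apply Theorem~\ref{kurz} directly by exhibiting $d$ pairwise incompatible losing coalitions in $H$, one for each player of $P_0$.

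First I would unpack the structure of $H$. A coalition $X$ is winning exactly when either $|X\cap P_0|\ge 2$ (the first threshold is met) or $|X|\ge 4$ (the second threshold is met). Consequently, the losing coalitions are precisely those with $|X\cap P_0|\le 1$ and $|X|\le 3$. In particular, any triple consisting of one $P_0$-player and two $P_1$-players is losing.

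Next I would label the players. Write $P_0=\{a_1,\ldots,a_d\}$ and partition $P_1$ into $d$ unordered pairs $\{b_i,b_i'\}$ for $i=1,\ldots,d$ (possible since $|P_1|=2d$). Define the candidate losing coalitions
\[
Y_i=\{a_i,b_i,b_i'\},\qquad i=1,\ldots,d.
\]
Each $Y_i$ has one $P_0$-player and two $P_1$-players, so it is losing by the previous paragraph. I would then take $S=\{Y_1,\ldots,Y_d\}$ and aim to show that every pair $Y_i,Y_j$ with $i\ne j$ is pairwise incompatible in the sense of Theorem~\ref{kurz}.

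The key combinatorial step is the certificate of non-weightedness. For any $i\ne j$, set
\[
X^1_{i,j}=\{a_i,a_j\},\qquad X^2_{i,j}=\{b_i,b_i',b_j,b_j'\}.
\]
Then $X^1_{i,j}$ is winning because $|X^1_{i,j}\cap P_0|=2\ge k_1$, and $X^2_{i,j}$ is winning because $|X^2_{i,j}|=4\ge k_2$. The multiset of players occurring in $(X^1_{i,j},X^2_{i,j})$ coincides with that of $(Y_i,Y_j)$, namely $\{a_i,a_j,b_i,b_i',b_j,b_j'\}$, so $(X^1_{i,j},X^2_{i,j};Y_i,Y_j)$ is a trading transform and hence a certificate of non-weightedness. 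By Theorem~\ref{kurz} this forces $\dim(H)\ge |S|=d$.

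The only real obstacle is the choice of the $Y_i$: one must pick losing coalitions whose union over any pair $\{Y_i,Y_j\}$ can be split in two complementary ways, one exploiting the $P_0$-threshold $k_1=2$ and the other exploiting the total-size threshold $k_2=4$. The pairing of $P_1$ used above achieves exactly this balance, and the rest of the argument is a direct application of Theorem~\ref{kurz}.
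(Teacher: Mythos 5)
Your proof is correct and is essentially identical to the paper's: the paper also partitions $P_1$ into $d$ pairs, forms the losing triples $Y_i=\{a_i,b_{2i},b_{2i+1}\}$, and uses the same certificate $(\{a_i,a_j\},\{b_{2i},b_{2i+1},b_{2j},b_{2j+1}\};Y_i,Y_j)$ before invoking the Kurz--Napel criterion. Only the indexing differs.
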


\begin{proof}Let $P_0 = \{a_0, \ldots, a_{d - 1}\}$ and $P_1 = \{b_0, \ldots, b_{2d - 1}\}$, define the sets  $Y_i = \{a_i, b_{2i}, b_{2i + 1}\}$, $i = 0, \ldots,  d - 1$, and let $S= \{Y_0, \ldots, Y_{d - 1}\}$. All coalitions from $S$ lose in $H$ since they have neither two players from $P_1$, nor four players in total. Then $S$ satisfies the conditions of the Kurz-Napel criterion since if $i \neq j$ 
\[
( \{a_i, a_j\}, \{b_{2i}, b_{2i + 1}, b_{2j}, b_{2j + 1}\}; Y_i, Y_j )
\]
is a certificate of non-weightedness for $H$ as the coalitions $X^1_{i,j}=\{a_i,a_j\}$ and $X^2_{i,j} = \{b_{2i}, b_{2i + 1}, b_{2j}, b_{2j + 1}\}$ are both winning (the first achieves the first threshold and the second achieves the second). By the criterion, the dimension of $H$ is at least $d$.
\end{proof}

Thus we have the following result.

\begin{theorem}
There exist roughly weighted games of arbitrary large dimension.
\end{theorem}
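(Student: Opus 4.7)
The plan is to observe that the theorem follows almost immediately from the preceding proposition, provided we can verify that the family $H_\exists(P,(2,4))$ is roughly weighted for every admissible $|P_0|=d$, $|P_1|=2d$. The proposition already gives us games whose dimension grows without bound (in fact, at least linearly in $d$), so the only remaining content of the theorem is the rough-weightedness claim. This is asserted in the paragraph introducing the proposition and attributed to \cite{gvozdeva13}, so in principle one could simply cite it; but it is cheap and self-contained to exhibit a rough weighting explicitly, which is what I would do.

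First I would write down a candidate rough weighting: assign weight $1$ to every player in $P_0$, weight $1/2$ to every player in $P_1$, and take quota $q=2$. If $X$ contains $a$ players from $P_0$ and $b$ players from $P_1$, its weight is $a+b/2$. I would then verify the two inequalities from Definition~\ref{roughlyweighted}. Strict weight below $2$ forces $a+b/2<2$, which (by checking the handful of cases $a\in\{0,1\}$) implies both $a<2$ and $a+b<4$, so neither threshold of $H_\exists(P,(2,4))$ is reached and $X$ is losing. Strict weight above $2$ forces $a+b/2>2$, which in each case either gives $a\ge 2$ (first threshold) or $a+b\ge 4$ (second threshold), so $X$ is winning. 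This is the only genuine step; I expect no obstacle since the verification reduces to inspecting a small table of $(a,b)$ values.

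Having done this, the proof of the theorem itself is a single sentence: for each $d\ge 2$, the game $H_\exists(P,(2,4))$ with $|P_0|=d$ and $|P_1|=2d$ is roughly weighted by the construction above, and by the preceding proposition it has dimension at least $d$; letting $d\to\infty$ yields roughly weighted games of arbitrarily large dimension. The only care needed is to note that the dimension bound of the proposition is already established independently, so combining the two statements requires no further argument about trading transforms or Theorem~\ref{kurz}.
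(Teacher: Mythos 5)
Your proposal is correct and follows essentially the same route as the paper, which likewise derives the theorem directly from the preceding proposition together with the rough-weightedness of the ${\bf k}=(2,4)$ games (which the paper simply cites from the literature rather than verifying). Your explicit rough weighting (weight $1$ on $P_0$, $1/2$ on $P_1$, quota $2$) checks out and merely makes self-contained a step the paper delegates to a citation.
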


Interestingly, we showed that we can get linear growth in the number of players without increasing the number of classes of equivalent players. If we start increasing both we will get a growth faster than linear. 

\begin{lemma} 
\label{OS3}
Let $P=P_0\cup P_1\cup \ldots\cup P_{m-1}$ with $|P_0| = k$, $|P_1| = |P_2| = \cdots = |P_{m - 1}| = 2k$ and ${\bf k}=(2,4,6, \ldots, 2m)$. 
Then the disjunctive hierarchical simple game $H=H_\exists (P,{\bf k})$  has dimension $d$ satisfying 
\begin{equation}
\label{inford}
k^{m - 1}\le d\le k^m (2k - 1)^{m - 1}. 
\end{equation}
\end{lemma}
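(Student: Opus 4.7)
The proof splits into the lower and upper bounds. For the lower bound $k^{m-1} \le d$, the plan is to exhibit $k^{m-1}$ pairwise incompatible losing coalitions and invoke Theorem~\ref{kurz}. Writing $P_0 = \{a_0, \ldots, a_{k-1}\}$ and $P_i = \{b^i_0, \ldots, b^i_{2k-1}\}$ for $i \ge 1$, I index the candidate losing coalitions by tuples $\bar{\jmath} = (j_1, \ldots, j_{m-1}) \in \{0, \ldots, k-1\}^{m-1}$ and set
\[
Y_{\bar{\jmath}} = \{a_{\sigma(\bar{\jmath})}\} \cup \bigcup_{i=1}^{m-1} \{b^i_{2 j_i}, b^i_{2 j_i + 1}\}, \qquad \sigma(\bar{\jmath}) := \sum_{i=1}^{m-1} j_i \pmod{k}.
\]
Each $Y_{\bar{\jmath}}$ has cumulative sums $1, 3, 5, \ldots, 2m-1$, all strictly below the thresholds $2, 4, \ldots, 2m$, so $Y_{\bar{\jmath}}$ loses in $H$. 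The key property of $\sigma$ is that whenever the $P_0$-players of $Y_{\bar{\jmath}}$ and $Y_{\bar{\jmath}'}$ coincide, $\bar{\jmath}$ and $\bar{\jmath}'$ must differ in at least two coordinates.

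For each pair $\bar{\jmath} \ne \bar{\jmath}'$ I plan to produce a length-$2$ certificate $(X^1, X^2; Y_{\bar{\jmath}}, Y_{\bar{\jmath}'})$ by splitting the multiset $M = Y_{\bar{\jmath}} \sqcup Y_{\bar{\jmath}'}$; shared players must appear in both $X^r$ while distinct players are distributed between them. If $\sigma(\bar{\jmath}) \ne \sigma(\bar{\jmath}')$, I place both $P_0$-players into $X^1$ (which then wins at level $0$), and, letting $l_0$ be the smallest level with $j_{l_0} \ne j'_{l_0}$, I assign all four distinct players at level $l_0$ to $X^2$, whose cumulative sum then reaches $2l_0 + 2$ at level $l_0$. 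If $\sigma(\bar{\jmath}) = \sigma(\bar{\jmath}')$, the common $P_0$-player lies in both $X^r$; let $i_1 < i_2$ be the two smallest levels at which the tuples disagree, and split the distinct players $3$-$1$ at level $i_1$ (three to $X^1$, one to $X^2$) and $0$-$4$ at level $i_2$ (all four to $X^2$), with any remaining distinct players at higher differing levels assigned to $X^2$. A direct computation of cumulative sums shows $X^1$ wins at level $i_1$ with cumulative $2i_1 + 2$ and $X^2$ wins at level $i_2$ with cumulative $2i_2 + 2$. Theorem~\ref{kurz} then gives $\dim H \ge k^{m-1}$.

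For the upper bound $d \le k^m (2k-1)^{m-1}$ I plan to realize $H$ as the intersection of that many weighted games. An analysis of the maximal losing coalitions shows that for $k \ge 2$ the highest tight level of any maximal losing coalition must be $m-1$, so each has size exactly $2m-1$, and a swap-by-swap argument shows that the shift-maximal losing coalitions are precisely those of profile $(1, 2, 2, \ldots, 2)$. Their number is $|P_0| \cdot \prod_{i=1}^{m-1} \binom{|P_i|}{2} = k \cdot (k(2k-1))^{m-1} = k^m (2k-1)^{m-1}$. For each such shift-maximal losing $Y^*$ I will construct a weighted game $G_{Y^*}$ that has $Y^*$ losing and every $H$-winning coalition winning, and verify that $\bigcap_{Y^*} G_{Y^*} = H$; the content here is that every maximal losing coalition of $H$ is obtained from some shift-maximal losing by replacing players with less desirable ones and is therefore losing in the corresponding $G_{Y^*}$. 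The main obstacle is Case~2 of the lower bound --- when two coalitions share the $P_0$-player, the naive length-$2$ split fails --- and the $\sigma$-parameterization is the crucial device, borrowed from the Kurz--Napel approach, that forces the extra differing level needed to produce a length-$2$ certificate.
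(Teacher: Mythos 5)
Your lower bound is, up to relabelling, exactly the paper's argument: the paper forms the same family of $k^{m-1}$ losing coalitions by letting the block index in $P_{m-1}$ be the sum of the remaining indices modulo $k$ (you let the checksum determine the $P_0$-player instead, which serves the same purpose of forcing any two members of $S$ to differ in at least two slots), and it certifies pairwise incompatibility with the same two length-$2$ trading transforms --- when the $P_0$-players differ, trade the $P_0$-player against a differing block; when they coincide, trade a single player against a differing block, which is your $3$--$1$ / $0$--$4$ split. This part of your proposal is correct and complete, and together with Theorem~\ref{kurz} it gives $d\ge k^{m-1}$ just as in the paper.

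The upper bound is where you depart from the paper and where your proposal has a genuine gap. The paper simply invokes Proposition~\ref{Lmax}, $\dim(G)\le |L_{\max}|$, together with the assertion that every maximal losing coalition has one player from $P_0$ and two from each $P_i$, $i\ge 1$, giving the count $k\binom{2k}{2}^{m-1}=k^m(2k-1)^{m-1}$. You correctly sense that the coalitions of profile $(1,2,\ldots,2)$ are only the \emph{shift}-maximal losing ones and that there are further maximal losing coalitions (already for $m=2$ and $k\ge 2$, any three players of $P_1$ form a maximal losing coalition of profile $(0,3)$, which is not of the claimed form), so you propose to build one weighted game $G_{Y^*}$ per shift-maximal losing coalition and to cover every other maximal losing coalition by one of these games. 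But that covering step is precisely the content of the upper bound, and you never establish it: you do not specify the weights of $G_{Y^*}$, and a maximal losing coalition obtained from $Y^*$ by downgrading players is \emph{not} a subset of $Y^*$, so the standard construction (losing coalitions of $G_{Y^*}$ being the subsets of $Y^*$) fails for exactly the coalitions you need to cover. As written, your upper bound is a plan rather than a proof. (The paper's own one-line justification is too quick for the same reason, but applying Proposition~\ref{Lmax} to the full set of maximal losing coalitions still yields a bound that is polynomial in $k$ for fixed $m$ and exponential in $m$ for fixed $k$, which is all the subsequent theorems require; if you want the stated constant $k^m(2k-1)^{m-1}$, the covering argument must actually be carried out.)
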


\begin{proof}
Let $P_0=\{\row ak\}$ and denote by $p^{(j)}_i$, $j \in \{0, \ldots, 2k - 1\}$, the $j$th player from part $P_i$. We will also denote $P^{(j)}_i=\{p^{(2j)}_{i}, p^{(2j + 1)}_{i}\}$, where $j \in \{0, \ldots, k - 1\}$ and $i\in [m]$. Then all coalitions of the form 
\begin{equation}
\label{coalition_in_S}
\{a_{i_0}\}\cup  P^{(i_1)}_{1} \cup P^{(i_2)}_{2}\cup \cdots \cup P^{(i_{m - 2})}_{m - 2}\cup P^{(i')}_{m - 1},
\end{equation}
 where $i'=i_0+ i_1+ \ldots + i_{m - 2} \pmod k$ will form the set $S$ to be used in the Kurz-Napel criterion. Firstly, we note that all the coalitions in $S$ are losing as no threshold is achieved. Let us show that any two of them are incompatible. Let
 \begin{align*}
 Y_1&=\{a_{i_0}\}\cup  P^{(i_1)}_{1} \cup P^{(i_2)}_{2}\cup \cdots \cup P^{(i_{m - 2})}_{m - 2}\cup P^{(i')}_{m - 1},\\
 Y_2&=\{a_{j_0}\}\cup  P^{(j_1)}_{1} \cup P^{(j_2)}_{2}\cup \cdots \cup P^{(j_{m - 2})}_{m - 2}\cup P^{(j')}_{m - 1},
 \end{align*}
be two coalitions from $S$. There are two cases.

\begin{itemize}
\item If $i_0 = j_0$, then there is at least one $\ell  \in \{1, \ldots, m - 2\}$ for which $i_\ell  \neq j_\ell $ (otherwise the coalitions would be identical), so $P^{(i_\ell )}_{\ell }$ is disjoint from $P^{(j_\ell )}_{\ell }$. If $i_r = j_r$ for all $r \in \{1, \ldots, \ell  - 1, \ell  + 1, \ldots, m - 2\}$, then $i' \neq  j'$ , thus we may assume that there exists also $r\in [m-1]$ such that $r\ne \ell$ and $i_r \ne j_r$. Without loss of generality assume that $\ell<r$. Then we get a certificate of non-weightedness
\[
(Y_1\cup \{p^{(2j_\ell )}_{\ell}\}\setminus P^{(i_r)}_{r}, (Y_2\setminus \{p^{(2j_\ell )}_{\ell}\}\cup P^{(i_r)}_{r}    ; Y_1,Y_2).
\]
by swapping one element $\{p^{(2j_\ell )}_{\ell }\}$ of $Y_2$ for two elements of $P^{(i_r)}_{r}$ from $Y_1$.  After the swap the first coalition will be winning since the $\ell$-th threshold is achieved and the second will be also winning since the $r$-th threshold is achieved.

\item If $i_0 \neq j_0$ but $i_r = j_r$ for all $r \in \{1, \ldots, m - 2\}$, then $i' \neq j'$, hence we may assume that there exists $r\in [m-1]$ such that $i_r\ne j_r$. In this case we get a certificate of non-weightedness
\[
(Y_1\cup \{a_{j_0} \}\setminus P^{(i_r)}_{r}, (Y_2\setminus \{a_{j_0} \}\cup P^{(i_r)}_{r}    ; Y_1,Y_2)
\]
by swapping $\{a_{j_0}\}$ and $P^{(i_r)}_{r}$. 
\end{itemize}
Since $|S|=k^{m - 1}$, by Theorem~\ref{kurz}, this means that the dimension of such a game is at least $k^{m - 1}$.

The upper bound is easily calculated with the help of Proposition~\ref{Lmax} taking in consideration that each maximal losing coalition consists of one member from $P_0$ and two members from each of  the $P_1, \ldots, P_n$.
\end{proof}

\begin{theorem}
In the class of disjunctive hierarchical games with bounded number of equivalence classes the worst-case growth of dimension is polynomial.  
\end{theorem}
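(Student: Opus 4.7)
My plan is to deduce the theorem from Lemma~\ref{OS3} together with Proposition~\ref{Lmax}, handling the lower and upper bounds on the worst-case dimension separately. Fix the number $m$ of equivalence classes throughout.

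For the lower bound, the family constructed in Lemma~\ref{OS3}---with $|P_0|=k$, $|P_i|=2k$ for $1\le i\le m-1$, and ${\bf k}=(2,4,\ldots,2m)$---has $n=k(2m-1)$ players and dimension at least $k^{m-1}$. Substituting $k=n/(2m-1)$ yields $\Omega(n^{m-1})$, a polynomial lower bound on the worst-case dimension in the class. For the same family, the upper bound $k^m(2k-1)^{m-1}$ from Lemma~\ref{OS3} rewrites as $O(n^{2m-1})$, showing the specific family grows polynomially.

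For the general upper bound, I would apply Proposition~\ref{Lmax} by covering $L(H)$ of an arbitrary $H = H_\exists(P,{\bf k})$ with losing sets of weighted games indexed by minimal ``blocker'' sets. A blocker is any $T\subseteq P$ satisfying $|T\cap(P_1\cup\cdots\cup P_i)|\ge n_1+\cdots+n_i-k_i+1$ for every $i$; its associated weighted game has weight $1$ on $T$, weight $0$ elsewhere, and threshold $1$, with losing set $\{X:X\cap T=\emptyset\}\subseteq L(H)$. Every losing coalition $X$ of $H$ has complement $P\setminus X$ meeting these inequalities and therefore contains some minimal such $T$, so the weighted games $\{G_T:T\text{ minimal blocker}\}$ intersect to $H$. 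The dimension is thus bounded by the number of minimal blockers, which one stratifies by their tightness pattern---the subset of $i\in[m]$ where $|T\cap(P_1\cup\cdots\cup P_i)|$ equals its lower bound exactly. There are at most $2^m$ such patterns, a constant for fixed $m$, and for the Lemma~\ref{OS3} family the counting reproduces $k^m(2k-1)^{m-1}=O(n^{2m-1})$.

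The main obstacle is establishing a polynomial bound on the number of minimal blockers \emph{uniformly} across all parameter choices: for arbitrary thresholds ${\bf k}$, the naive product $\prod_j\binom{n_j}{|T\cap P_j|}$ can be exponential in $n$ when some cardinality $|T\cap P_j|$ falls in the middle range. The crucial idea needed to close this gap is to collapse exponentially many blockers sharing the same tightness pattern into a single weighted game with more general (non-zero-one) weights, exploiting that the tightness pattern takes only constantly many values for fixed $m$. This replaces the naive enumeration by a tightness-based grouping, generalizing the explicit bookkeeping of Lemma~\ref{OS3} to arbitrary threshold vectors ${\bf k}$.
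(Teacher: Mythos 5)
The first half of your argument --- reading off the lower bound $k^{m-1}=\Omega(n^{m-1})$ and the upper bound $k^m(2k-1)^{m-1}=O(n^{2m-1})$ for the family of Lemma~\ref{OS3} after substituting $n=k(2m-1)$ --- is precisely the paper's proof, which consists of nothing more than fixing $m$ in \eqref{inford} and observing that both bounds are polynomial in $k$ while the number of players is linear in $k$. If the theorem is read, as the paper reads it, as asserting that polynomial (rather than exponential) growth is what this construction delivers when $m$ is bounded, you are done at that point and the rest of your proposal is not needed.

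Your second half aims at the stronger, literal reading of ``worst-case growth'': a polynomial upper bound on the dimension of \emph{every} $m$-partite disjunctive hierarchical game, uniformly over the threshold vector $\mathbf{k}$. The paper does not attempt this, and your attempt does not close it. Your blocker construction is just Proposition~\ref{Lmax} in disguise (minimal blockers are exactly the complements of maximal losing coalitions), and, as you correctly observe, their number is genuinely exponential when some $|T\cap P_j|$ sits in the middle range. The proposed remedy --- collapsing all minimal blockers with a common tightness pattern (or cardinality profile) into a single weighted game --- is asserted, not proved, and it is doubtful as stated: the union of the losing sets $\{X: X\cap T=\emptyset\}$ over all $T$ with profile $(t_1,\ldots,t_m)$ is $\{X: |X\cap P_j|\le n_j-t_j \text{ for all } j\}$, which is the losing set of an \emph{intersection} of $m$ counting games, not of a single weighted game; there is no reason a single hyperplane separates this down-set from $W(H)$. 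So if the uniform upper bound is required, your argument has a real gap at its crucial step --- but note that the paper's own proof does not supply that bound either, and the result it actually establishes is fully covered by the first paragraph of your proposal.
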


\begin{proof}
If in \eqref{inford} we fix $m$, then we have $(2k+1)m$ voters which is linear in $k$ and the dimension which is polynomial in $k$. Hence the dimension growth is polynomial of degree $m$.
\end{proof}

\begin{theorem}
\label{OS2}
In the class of disjunctive hierarchical games with bounded number of players in equivalence classes the worst-case growth of dimension is exponential.  
\end{theorem}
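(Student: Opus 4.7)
The plan is to apply Lemma~\ref{OS3} with $k$ set to a small fixed constant, the natural choice being $k=2$. With this choice, the equivalence classes of the constructed game $H=H_\exists(P,\mathbf{k})$ have sizes $k=2$ and $2k=4$, so each class contains at most $4$ players; this is the bound on class sizes that makes the theorem applicable. The total number of players is
\[
n \;=\; k + (m-1)\cdot 2k \;=\; (2m-1)k,
\]
so $n$ grows linearly in $m$ (indeed $m = (n+k)/(2k)$), while $m$ can be chosen to be arbitrarily large.

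Next, I would invoke the lower bound from Lemma~\ref{OS3}, which gives $\dim(H) \ge k^{m-1}$. Substituting $m = (n+k)/(2k)$ yields
\[
\dim(H) \;\ge\; k^{\,(n-k)/(2k)},
\]
which, for any fixed $k\ge 2$, is exponential in $n$. Thus within the class of disjunctive hierarchical games with equivalence-class sizes bounded by $2k$, one obtains a sequence of games whose dimension grows exponentially in the number of players, which is precisely the claim of Theorem~\ref{OS2}.

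There is essentially no serious obstacle: all the real work, namely constructing the pairwise-incompatible family $S$ of losing coalitions and verifying the certificates of non-weightedness, has already been carried out in the proof of Lemma~\ref{OS3}. The only remaining step is the elementary observation that fixing $k$ forces the class sizes to be bounded while allowing $m$ (and hence $n$) to grow freely, so that the lower bound $k^{m-1}$ translates into genuine exponential growth in $n$.
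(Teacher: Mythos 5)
Your proposal is correct and follows essentially the same route as the paper: fix $k$ in Lemma~\ref{OS3} so that class sizes are bounded, note that the number of players grows linearly in $m$, and conclude that the lower bound $k^{m-1}$ is exponential in the number of players. Your count of the players, $n=k+(m-1)\cdot 2k$, is in fact slightly more careful than the paper's.
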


\begin{proof}
If in \eqref{inford} we fix $k$, then we have $(2k+1)m$ voters which is linear in $m$ and the dimension is exponential in $m$. Hence the dimension growth is exponential in~$m$.
\end{proof}

These results answer directly the question from \cite{freixas08} about possibility of a polynomial or exponential growth in complete simple games.\par\bigskip

\noindent\textbf{Dimension of Conjunctive Hierarchical Games.} 
\medskip

 \cite{freixas08} studied a class of games, that they called games with minimum, which, as was proved in \cite{gvozdeva13}, is nothing other than the class of conjunctive hierarchical games. The 
theorem they formulated state that the dimension $d$ of an $m$-partite conjunctive  hierarchical game without veto or dummy players satisfies the inequalities ${\lceil \frac{m}{2} \rceil \le d \le m}$. 
%
%
However, we have to reprove the lower bound in this theorem due to two reasons: 1) the gap in their proof which will be explained in the next section after rectification of the concept of dimension; 
2) the unnecessary requirement of having no dummies or vetoers. \par\medskip

But, firstly, we will give a simple  proof of the upper bound as well getting rid of unnecessary requirement of having no dummies or vetoers. 

\begin{proposition}
\label{upperbound}
The dimension $d$ of an $m$-partite conjunctive  hierarchical game is at most $m$.
\end{proposition}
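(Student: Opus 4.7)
The plan is to realise $H_\forall(P,{\bf k})$ directly as an intersection of $m$ weighted games, one for each clause in the defining formula of $W_\forall$. For every $i \in \{1, \ldots, m\}$ I would introduce a weighted game $G_i = [k_i; w^{(i)}_1, \ldots, w^{(i)}_n]$ by assigning weight $w^{(i)}_p = 1$ to every player $p \in P_1 \cup \cdots \cup P_i$ and weight $w^{(i)}_p = 0$ to every player $p \in P_{i+1} \cup \cdots \cup P_m$, with quota equal to $k_i$. Since the paper's definition of a weighted game permits non-negative (rather than strictly positive) weights, the ``tail'' players simply become weight-zero dummies in $G_i$, and $G_i$ is a bona fide weighted game.

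By construction, a coalition $X \subseteq P$ is winning in $G_i$ if and only if
\[
\sum_{p \in X} w^{(i)}_p \;=\; \left| X \cap \left(\bigcup_{j=1}^{i} P_j\right) \right| \;\ge\; k_i,
\]
which is precisely the $i$-th clause in the definition of $W_\forall$. Therefore $X$ is winning in $G_1 \cap G_2 \cap \cdots \cap G_m$ if and only if it satisfies all $m$ threshold inequalities simultaneously, i.e., if and only if $X \in W_\forall$. Hence $H_\forall = G_1 \wedge G_2 \wedge \cdots \wedge G_m$, and by the intersection definition of dimension, $\dim(H_\forall) \le m$.

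I do not expect any substantive obstacle here: the argument is a direct unpacking of the definition of a conjunctive hierarchical game. The only point worth flagging is that each $G_i$ remains a legitimate weighted game even when veto players ($k_1 = n_1$) or dummy players ($k_{m-1} = k_m$) are present, because non-negative weights are permitted and each quota $k_i$ is a positive integer. This is exactly why the upper bound $d \le m$ can be established without the extraneous ``no veto, no dummy'' hypothesis imposed by \cite{freixas08}.
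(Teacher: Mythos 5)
Your proposal is correct and is essentially identical to the paper's own proof: the same $0/1$ weight functions supported on $P_1\cup\cdots\cup P_i$ with quota $k_i$, yielding $H_\forall = G_1\wedge\cdots\wedge G_m$. The remark that this works without excluding veto or dummy players matches the paper's intent as well.
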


\begin{proof}
Let $H=H_\forall (P,{\bf k})$, where $P=P_1\cup \ldots \cup P_m$ with ${\bf k}=(\row km)$, and suppose the game is $m$-partite. Let us denote $n=|P|$. Let us define $m$ weight functions on $P$ by
\[
    w_s(p)=
    \left\{
    \begin{array}{ll}
      1,&  \text{if}\ p\in P_1\cup \ldots \cup P_s \\
      0,&  \text{if}\ p\in P_{s+1}\cup \ldots \cup P_m
    \end{array}
    \right.
  \]
  and $m$ thresholds $q_s=k_s$, $s\in [m]$. We define game $G_s$ on $P$ by the weight function $w_s$ and threshold $q_s$. It is clear that a coalition $X$ wins in $G_s$ if and only if $ \left|X\cap \left(\cup_{j=1}^s P_i\right)\right|\ge k_s$, hence $H=G_1\wedge\ldots\wedge G_m$ and  $d\le m$.
\end{proof}

We will need the following lemma. 

\begin{lemma}
\label{shiftmaxl}
Let $H=H_\forall(P,{\bf k})$ be $m$-partite conjunctive hierarchical game with $P=P_1\cup\cdots\cup P_m$ being the equivalence classes of $\sim_H$, $|P_i|=n_i$ and ${\bf k}=(\row km)$. Then, if no dummies present, any shift-maximal losing coalition of $H$ corresponds to one of the following $m$ models:
\[
{\mathcal M}_i=\{1^{a_1},\ldots,i^{a_i}, (i+1)^{n_{i+1}},\ldots, m^{n_m}\}\quad (i=1,\ldots,m),
\]
where $\row ai$ satisfy the following conditions:
\begin{itemize}
\item[(i)]  $a_1+\ldots+a_i=k_i-1$; 
\item[(ii)] If $0< a_t $ for $t\le i$, then $a_s=n_s$ for all $s<t$;
\item[(iii)]  $a_i< n_i$.
\end{itemize}
If dummies present, then the model ${\mathcal M}_m$ does not exist and we have only $m-1$ models.
\end{lemma}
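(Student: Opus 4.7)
Let $X$ be a shift-maximal losing coalition of $H$ with model $(a_1,\ldots,a_m)$, and write $S_\ell=a_1+\cdots+a_\ell$. Since $X$ is losing, $X\ne P$, so $\{j:a_j<n_j\}$ is non-empty; let $i$ be its largest element. By construction $a_j=n_j$ for all $j>i$, giving the saturated tail of ${\mathcal M}_i$, and $a_i<n_i$ is condition (iii). The rest of the plan is to show that $i$ is the unique failing index of $X$ and to pin down the admissible profiles on the first $i$ classes.

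A shift-maximal losing coalition is in particular maximal losing, so adding a player to $P_i$ (which has room since $a_i<n_i$) produces a winning coalition. Reading off the new partial sums yields $S_\ell\ge k_\ell$ for $\ell<i$ and $S_\ell\ge k_\ell-1$ for $\ell\ge i$, so failures of $X$ occur only at indices $\ell\ge i$. To rule out $\ell>i$, I iterate \eqref{sizeofk_i} along the saturated tail $a_s=n_s$ ($s>i$), obtaining $S_\ell-k_\ell\ge (S_i-k_i)+(\ell-i)\ge \ell-i-1\ge 0$ for $\ell>i$. Hence the unique failure is at $i$, so $S_i=k_i-1$, which is (i).

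For (ii), suppose for contradiction that $a_t>0$ for some $t\le i$ while $a_s<n_s$ for some $s<t$. The shift moving a player from $P_t$ to $P_s$ is then legitimate, and it leaves $S_\ell$ unchanged for every $\ell\ge t$, in particular for $\ell=i$. Therefore $S_i$ remains at $k_i-1<k_i$, the shifted coalition is still losing, and this contradicts shift-maximality; hence $a_s=n_s$ for every $s<t$. When dummies are present, $k_{m-1}=k_m$, so $i=m$ would give $S_m=k_m-1=k_{m-1}-1$ and consequently $S_{m-1}\le S_m<k_{m-1}$, a failure below $i$ contradicting the previous paragraph; thus $i\le m-1$ and ${\mathcal M}_m$ is excluded, leaving only $m-1$ models.

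I expect the main obstacle to be the no-failure-above-$i$ step: converting the unit deficit $S_i=k_i-1$ into $S_\ell\ge k_\ell$ for every $\ell>i$ requires the repeated gain of $1$ per level provided by the strict inequality in \eqref{sizeofk_i}, and this is exactly why $i$ should be defined as the largest index with $a_i<n_i$ rather than as the smallest failing index of $X$.
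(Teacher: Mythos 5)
Your proof is correct and rests on the same ingredients as the paper's: maximality under inclusion to force $a_1+\cdots+a_i=k_i-1$ and the saturated tail, the shift condition for (ii), and the parameter inequalities \eqref{sizeofk_i}--\eqref{sumofns} for the edge cases, so it is essentially the same argument. The only organizational difference is the entry point — the paper fixes a violated threshold $i$ and deduces tail saturation from maximality, whereas you fix the largest unsaturated level and telescope \eqref{sizeofk_i} to show it is the unique violated threshold, which makes explicit a uniqueness claim the paper leaves implicit.
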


\begin{proof}
Suppose the last $m$th level does not consist of dummies. Suppose that a maximal losing coalition $X$ fails the $i$th condition, i.e., in the first $i$ levels there are no $k_i$ elements, that is, $a_1+\ldots+a_i<k_i$. Due to \eqref{sumofns} we may assume that  $a_1+\ldots+a_i=k_i-1$. Since the $i$th threshold is already violated, we can include all elements of $P_{i+1}\cup\cdots\cup P_m$ without making this coalition winning. If for $i\le t$ we have $a_t>0$ but $a_{t-1}<n_{t-1}$, then we can replace one element of level $t$ with element of level $t-1$ without making coalition winning. As $X$ was shift-maximal, this is not possible. If the $m$th level consists of dummies, then $k_{m-1}=k_m$ so it is impossible to fail just one last threshold. Finally, if $a_i=n_i$, then by (ii) we have $a_s=n_s$ for all $s=1,\ldots, i-1$, in which case $X=P$.
\end{proof}

We illustrate this lemma with an example.

\begin{example}
Let us consider the game $H_{\forall}(P,{\bf k})$, where $|P_1|=|P_2|=|P_3|=4$, and ${\bf k}=(2,4,7)$. Then there are three models of shift-maximal losing coalitions:
\[
\{1,2^4,3^4\},\quad \{1^3,3^4\}, \quad \{1^4,2^2\}.
\]
The first type of coalitions fail the first threshold, the second type of coalitions fail the second threshold and the third type of coalitions fail the third threshold.
\end{example}

We can now prove the main result of this subsection.

\begin{theorem}
\label{OS}
The dimension $d$ of an $m$-partite conjunctive  hierarchical game  $H$
satisfies  ${\lceil \frac{m-1}{2} \rceil \le d \le m}$. If  $H$ has no dummy players, then ${\lceil \frac{m}{2} \rceil \le d \le m}$.
\end{theorem}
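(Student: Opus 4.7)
The upper bound $d\le m$ is already Proposition~\ref{upperbound}, so only the lower bound is new. My plan is to apply the Kurz--Napel criterion (Theorem~\ref{kurz}) to a set $S$ of pairwise incompatible losing coalitions, chosen so that each pair in $S$ admits an explicit length-two certificate of non-weightedness. The set $S$ will consist of one shift-maximal losing coalition per model, picked from every second model in the classification of Lemma~\ref{shiftmaxl}.

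For each available model $\mathcal{M}_i$ I pick a canonical representative $Y_i$ that fills the lower levels as much as possible: set $a_s=n_s$ for $s<t_i$, where $t_i$ is the smallest index with $n_1+\cdots+n_{t_i}\ge k_i-1$, take $a_{t_i}=k_i-1-(n_1+\cdots+n_{t_i-1})$ as the partial level, zero out $a_{t_i+1},\ldots,a_i$, and then append the full levels $n_{i+1},\ldots,n_m$ dictated by the model. Lemma~\ref{shiftmaxl} together with \eqref{sumofns} guarantees that $Y_i$ is legitimate. I then take $S=\{Y_1,Y_3,Y_5,\ldots\}$, selecting every other available index: this yields $|S|=\lceil m/2\rceil$ when all $m$ models are available, and $|S|=\lceil(m-1)/2\rceil$ when the presence of dummies eliminates $\mathcal{M}_m$ from the list.

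The key step is to exhibit, for every pair $Y_i,Y_j\in S$ with $i<j$ (and hence $j-i\ge 2$), a length-two trading transform $(X^1,X^2;Y_i,Y_j)$ in which $X^1$ and $X^2$ are both winning. Concretely, I take
\[
X^1=(n_1,\ldots,n_{i-1},\,a_i+1,\,n_{i+1},\ldots,n_{j-1},\,n_j-2,\,n_{j+1},\ldots,n_m),
\]
\[
X^2=(n_1,\ldots,n_{i-1},\,n_i-1,\,n_{i+1},\ldots,n_{j-1},\,b_j+2,\,n_{j+1},\ldots,n_m),
\]
where $b_j$ is the partial-level count of $Y_j$. The multiset identity $X^1\cup X^2=Y_i\cup Y_j$ is immediate, and checking that each $X^r$ satisfies every threshold $k_s$ reduces, by telescoping the chain of inequalities \eqref{sizeofk_i} from $i+1$ to $s$, to the bound $k_j\le k_i+n_{i+1}+\cdots+n_j-(j-i)$; the slack $j-i\ge 2$ is precisely what provides the room for the $\pm1$ shift at level $i$ and the $\pm2$ shift at level $j$ without violating any threshold. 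Applying Theorem~\ref{kurz} then yields the desired lower bound.

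The main obstacle is the boundary index $i=1$ in the presence of veto players, i.e., when $k_1=n_1$. Then $Y_1=P\setminus\{v\}$ for some vetoer $v$, and a count of vetoer occurrences on the two sides of any candidate trading transform involving $Y_1$ shows that the losing side falls short of the winning side by exactly one vetoer, so the swap above cannot yield two simultaneously winning coalitions. My plan here is to invoke Lemma~\ref{reduced}: reduce $H$ by stripping off veto and dummy players, apply the core length-two argument to the resulting veto- and dummy-free reduced game, and then transfer the bound back by observing that any decomposition of $H$ as an intersection of $d$ weighted games restricts to one of $H^A$, so $\dim(H)\ge\dim(H^A)$. Careful case analysis on the presence of vetoers and dummies, together with the parity of $m$, then yields the two advertised bounds $\lceil(m-1)/2\rceil$ in general and $\lceil m/2\rceil$ in the no-dummy case.
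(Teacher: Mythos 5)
The overall plan---one shift-maximal losing representative from every second model of Lemma~\ref{shiftmaxl}, pairwise incompatibility certified by a length-two trading transform, then Theorem~\ref{kurz}---is essentially the route the paper takes (the paper phrases the counting as a pigeonhole over all $m$ models, but the content is the same, and the upper bound is indeed just Proposition~\ref{upperbound}). However, your explicit certificate is wrong. By Lemma~\ref{shiftmaxl}(ii) the representative of model ${\mathcal M}_j$ fills levels greedily from the bottom, so its partial level $t_j$ is typically strictly below $j$ and levels $t_j+1,\ldots,j$ are empty, whereas levels $i+1,\ldots,j-1$ of $Y_i$ are full. Your $X^1$ and $X^2$ both carry the full counts $n_{i+1},\ldots,n_{j-1}$ at the intermediate levels, so $X^1\cup X^2\neq Y_i\cup Y_j$ and you do not have a trading transform at all. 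The paper's own example ($n_1=n_2=n_3=4$, ${\bf k}=(2,4,7)$) already breaks it: for $i=1$, $j=3$ we have $Y_1=\{1,2^4,3^4\}$ and $Y_3=\{1^4,2^2\}$, so $Y_1\cup Y_3$ contains six level-$2$ players while your $X^1\cup X^2$ contains eight. The transform must preserve each $Y$'s own intermediate profile: move a single player from a low level of $Y_j$ to $Y_i$ and two players of $P_j\subseteq Y_i$ to $Y_j$, leaving everything else untouched. Moreover the threshold verification for the resulting coalitions is genuinely delicate rather than a one-line telescoping: removing one player from a low level of $Y_j$ can violate an intermediate threshold $k_t$ with $t<j$ whenever the $k_j-1$ players of $Y_j$ are concentrated in the bottom levels, and this is precisely the hard part of the argument.

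Second, the veto repair does not deliver the stated bound. You are right that when $k_1=n_1$ the pair $(Y_1,Y_j)$ admits no length-two certificate (counting vetoers on the two sides kills it), and the restriction observation $\dim(H)\ge\dim(H^A)$ is sound. But stripping the veto class leaves an $(m-1)$-partite game, so this route only yields $\lceil\frac{m-1}{2}\rceil$, which equals $\lceil\frac{m}{2}\rceil$ only when $m$ is even; for odd $m$ you are short by one, and no further idea is offered to recover the missing unit. So as written the proposal proves a strictly weaker lower bound than claimed whenever veto players are present and $m$ is odd. (Your instinct that the veto case needs separate treatment is well founded: the paper's proof makes no special provision for it and tacitly assumes the transferred player $x$ never breaks a lower threshold, which is exactly where vetoers and bottom-heavy models cause trouble. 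But the treatment you propose does not close that case either.)
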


\begin{proof}
It is easy to see that adding a level of dummy players does not change the dimension of the game so we may assume that $H$ has no dummy players. 
To prove the lower bound we assume that there are $\ell<\lceil \frac{m}{2} \rceil$ weighted games $\row H{\ell}$ such that $H=H_1\wedge \ldots \wedge H_\ell$. Then, in each game $H_i$, all winning coalitions of $H$ are winning and if a coalition is losing, then it is losing in one of the $H_i$. 


So there are $\ell<m/2$ games $\row H{\ell}$ but by Lemma~\ref{shiftmaxl} we have $m$ shift-maximal losing coalitions. Hence,
due to the pigeonhole principle there exist an index $i$ such that in $H_i$ at least three shift-maximal losing coalitions of $H$, belonging to different models, are losing in~$H_i$. Then there are two coalitions among these, say $L_1$ and $L_2$, whose corresponding models, say ${\mathcal M}_i$ and ${\mathcal M}_j$,  satisfy $i+2\le j$. Then 
\begin{align*}
L_1&=C_1\cup\ldots \cup C_i\cup P_{i+1}\cup\ldots\cup P_m,\\ 
L_2&=D_1\cup\ldots \cup D_j\cup P_{j+1}\cup\ldots\cup P_m,
\end{align*}
where $C_i\subseteq P_i$, $D_i\subseteq P_i$, moreover, by Lemma~\ref{shiftmaxl} (i) we have $|C_1|+\ldots+|C_i|=k_i-1$ and $|D_1|+\ldots+|D_j|=k_j-1$ with $|C_i|<n_i$ and $|D_j|<n_j$.
We note that Lemma~\ref{shiftmaxl} also implies that $|C_s|\le |D_s|$ for $s\in [i]$. 




Let $s\in [j]$ be the largest positive integer with $D_s \ne \emptyset$. Then $ |D_1|+\ldots+|D_s|= |D_1|+\ldots+|D_j|= k_j-1$. Suppose $s\le i$.  As $k_j-1\ge k_{j-1}> k_i$, we have $|C_s|<|D_s|$.  If $s>i$, then $D_i=P_i$ and $|C_i|<|D_i|$ is also true. Thus, a transfer of a player, say  $x$, from $D_s$ to $C_s$ in the first case and from $D_i$ to $C_i$ in the second,  is possible. Let us do this transfer and, for simplicity, let us keep notation for these sets unchanged. 

After the transfer we will have 
\begin{equation}
\label{lackof2}
|D_1|+\ldots+|D_{j}|= k_{j}-2,
\end{equation}
i.e., the $j$th threshold will become further from reach but let us show that for any $t\in [j-1]$ the $t$-th threshold is still reached for $L_2\setminus \{x\}$. If $D_{j-1}=\emptyset$, this is clear. If $D_{j-1}\ne \emptyset$, then by  Lemma~\ref{shiftmaxl} (ii) we have $D_r=P_r$ for all $r\in [j-2]$. Due to \eqref{sumofns} we see that only $(j-1)$th threshold may be violated by $L_2\setminus \{x\}$. After the transfer we have, however, by \eqref{sumofns}
\[
|D_1|+\ldots+|D_{j-1}|=n_1+\ldots +n_{j-1}-1\ge k_{j-1}+(j-3)\ge k_{j-1}
\]
since $j\ge i+2\ge 3$.\par\smallskip


Now since $i<j-1$ we have $L_1\supset P_j$. If $D_{j-1}=P_{j-1}$, then by~\eqref{sumofns}  and~\eqref{lackof2}
\[
n_j-|D_j| =n_1+\ldots+n_j-(k_j-2)\ge 2,
\]
and by \eqref{n_i>1} there is a capacity to move two elements from $C_j$ to $D_j$.
If $D_{j-1}\ne P_{j-1}$, then $D_j=\emptyset$ and again there is a capacity to move two elements from 
$C_j$ to $D_j$. Suppose the elements transferred are $y,z\in C_j$.  This transfer will make $(L_2\setminus \{x\})\cup \{y,z\}$  winning. 

Let us now notice that for $L_1$ before the transfer we had
\[
|C_1|+\ldots +|C_i|+|P_{i+1}|+\ldots+|P_r|\ge k_r +(r-i)
\]
for any $r\ge j$. Indeed, by \eqref{sizeofk_i} 
\begin{align*}
&|C_1|+\ldots +|C_i|+|P_{i+1}|+\ldots+|P_r|=k_i-1 + \sum_{t=i+1}^r n_t\\ 
&\ge k_i-1+ \sum_{t=i+1}^r (k_{t}-k_{t-1}-1)= k_r +(r-i).
\end{align*}
This means that $(L_1\cup \{x\})\setminus \{y,z\}$ is winning as well. 
We obtained a certificate of non-weightedness
\[
((L_1\cup \{x\})\setminus \{y,z\}, (L_2\setminus \{x\})\cup \{y,z\}   ;L_1,L_2)
\]
which gives us a contradiction. 
\end{proof}

The only way we can grow the dimension in hierarchical conjunctive games is to increase the number $m$ of equivalence classes. This was not the case whith disjunctive hierarchical games. 

Theorems~\ref{OS2} and~\ref{OS} demonstrate that, inter alia,  the dimension is not preserved under duality. Indeed, it is known \citep{gvozdeva13} that the duality takes us from disjunctive hierarchical games to  conjunctive ones and vice versa and their respective growth of dimension are very different. \par\medskip

{\bf Codimension.}\par\medskip

A concept closely related to dimension is that of the codimension. 

\begin{definition}
The \emph{codimension} of a simple game is the minimum number of weighted simple games whose union forms the given game. That is, the simple game $G = (P, W)$ has codimension $n$ if $W = W_1 \cup \cdots \cup W_n$, where each of the games $(P, W_1), \ldots, (P, W_n)$ is weighted, and $W$ cannot be represented as the union of fewer than $n$ weighted games. We will denote the codimension of $G$ by $\text{codim}(G)$.
\end{definition}

This concept emerges in relation to the duality of games.

\begin{theorem}[Freixas-Marciniak, 2009]
If $G $ is a simple game, then $$\text{codim}(G^\ast) = \dim(G).$$
\end{theorem}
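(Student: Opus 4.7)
The plan is to derive the identity $\mathrm{codim}(G^{\ast})=\dim(G)$ from two elementary duality facts together with the involution $(G^{\ast})^{\ast}=G$. The two facts I need are: (a) the dual of a weighted game is weighted, and (b) duality interchanges intersection and union, i.e.\ $(G_1\wedge\cdots\wedge G_d)^{\ast}=G_1^{\ast}\vee\cdots\vee G_d^{\ast}$, where $\vee$ denotes the union of games (a coalition wins in the union iff it wins in at least one summand). Both facts are essentially folklore, but I would include short verifications, because the careful handling of inequalities in fact (a) is exactly where the argument can slip.

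Granted (a) and (b), the proof of the theorem is a two-way inequality. For the direction $\mathrm{codim}(G^{\ast})\le\dim(G)$, write $G=G_1\wedge\cdots\wedge G_d$ as an intersection of $d=\dim(G)$ weighted games. Applying $\ast$ and using (a) and (b) gives $G^{\ast}=G_1^{\ast}\vee\cdots\vee G_d^{\ast}$ as a union of $d$ weighted games, so $\mathrm{codim}(G^{\ast})\le d$. For the reverse direction, start from a minimal union decomposition $G^{\ast}=H_1\vee\cdots\vee H_c$ with $c=\mathrm{codim}(G^{\ast})$ and each $H_j$ weighted, dualise using $(G^{\ast})^{\ast}=G$ together with the dual form of (b), and obtain $G=H_1^{\ast}\wedge\cdots\wedge H_c^{\ast}$, an intersection of $c$ weighted games by (a), so $\dim(G)\le c$.

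The main technical obstacle is fact (a). If $G=[q;w_1,\ldots,w_n]$ and $W(P)=\sum_i w_i$, then $X$ wins in $G^{\ast}$ iff $P\setminus X$ loses in $G$, i.e.\ iff $\sum_{i\in X}w_i>W(P)-q$. This is a strict inequality and is not, verbatim, a weighted representation. The standard fix is to replace the weights by an equivalent integer-valued system (possible for any finite weighted game) and then set the dual quota to $W(P)-q+1$; alternatively, for real weights one can set the dual quota to the minimum of $\sum_{i\in X}w_i$ over all $X$ satisfying the strict inequality, which exists because there are only finitely many coalitions. Either way $G^{\ast}$ is exhibited as a weighted game. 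Fact (b) is a direct unwinding of the definition of $\ast$: $X$ is losing in $G_1\wedge G_2$ iff it is losing in $G_1$ or in $G_2$, which by complementation is the defining condition for $P\setminus X$ to win in $G_1^{\ast}\vee G_2^{\ast}$; the general $d$-ary version follows by induction.
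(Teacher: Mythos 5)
Your proof is correct. The paper does not actually argue this theorem from first principles: its ``proof'' is a citation to Theorem~3.2(ii) of Freixas and Marciniak (2009), specialised to the class of weighted games, together with the remark that this class is closed under duality. Your two ingredients are exactly what that citation conceals: fact (a) \emph{is} the closure of weighted games under duality, and fact (b) is the de Morgan law $(G_1\wedge\cdots\wedge G_d)^{\ast}=G_1^{\ast}\vee\cdots\vee G_d^{\ast}$, which the paper itself quotes later (from Taylor and Zwicker, Proposition~1.4.3) when proving that Boolean dimension is self-dual. Assembling these into a two-way inequality via the involution $(G^{\ast})^{\ast}=G$ is the right move, and your handling of the strict-versus-weak inequality in fact (a) --- resetting the dual quota to the minimum of $\sum_{i\in X}w_i$ over coalitions with $\sum_{i\in X}w_i>W(P)-q$, which exists by finiteness --- is precisely the step where a careless version of the argument would break. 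The only residual pedantry is the degenerate case where the empty coalition wins in $G^{\ast}$ (i.e.\ $P$ loses in $G$), since the paper's definition insists the quota be a positive real; this edge case is universally ignored and does not affect the theorem. In short, your argument is a legitimate self-contained substitute for the paper's citation: the paper buys brevity by outsourcing, while you buy the reader an actual proof.
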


\begin{proof}
This is a simplification of Theorem 3.2(ii) in \cite{freixas09}, in which we take $\mathcal{C}$ to be the class of weighted simple games, along with their observation that this class is closed under duality.
\end{proof}

Due to Freixas-Marciniak theorem we can extract some consequences from our results with respect to codimension.

\begin{corollary}
The codimension $d$ of an $m$-partite disjunctive  hierarchical game  satisfies  ${\lceil \frac{m}{2} \rceil -1 \le d \le m}$. 
\end{corollary}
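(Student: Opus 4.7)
The plan is to invoke the Freixas--Marciniak theorem stated just above as Theorem 4. Since duality is involutive, for any simple game $H$ we have
\[
\text{codim}(H) = \text{codim}((H^\ast)^\ast) = \dim(H^\ast),
\]
so the codimension of a disjunctive hierarchical game equals the dimension of its dual. The task then reduces to identifying that dual and applying Theorem~\ref{OS}.

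First I would invoke the result of \cite{gvozdeva13}, already used in the text above, that duality interchanges the classes of disjunctive and conjunctive hierarchical games and preserves the number of equivalence classes: the dual of an $m$-partite disjunctive hierarchical game $H_\exists(P,\mathbf{k})$ is an $m$-partite conjunctive hierarchical game. (One should note, however, that dummy and veto players are interchanged under duality, so even if $H$ has no dummies the dual may have them.)

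Next I would apply Theorem~\ref{OS} to $H^\ast$. Since in general one cannot assume $H^\ast$ is free of dummy players, I use the weaker of the two bounds in Theorem~\ref{OS}, namely
\[
\left\lceil \tfrac{m-1}{2} \right\rceil \;\le\; \dim(H^\ast) \;\le\; m.
\]
A short case-check on parity shows $\lceil m/2 \rceil - 1 \le \lceil (m-1)/2 \rceil$: for odd $m$ both sides equal $(m-1)/2$, and for even $m$ the right side is $m/2$ while the left is $m/2-1$. Combining with the identity from Freixas--Marciniak yields exactly the bounds claimed in the corollary.

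I do not expect any substantive obstacle: the hard work has already been done in Theorem~\ref{OS} and in the duality statement from \cite{gvozdeva13}. The only point requiring any care is the transition between the two forms of the lower bound, and deciding which regime of Theorem~\ref{OS} (with or without dummies) is forced on us once we pass to the dual game.
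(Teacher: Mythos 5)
Your proof is correct and follows essentially the same route as the paper: the paper's own proof is a one-line appeal to Theorem~\ref{OS} together with the fact from \cite{gvozdeva13} that the dual of a disjunctive hierarchical game is conjunctive hierarchical, combined implicitly with the Freixas--Marciniak identity. Your version merely spells out the details the paper leaves tacit, including the correct observation that one must use the dummy-permitting lower bound $\lceil (m-1)/2\rceil$ and that it dominates $\lceil m/2\rceil - 1$.
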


\begin{proof}
Follows from Theorem~\ref{OS} and the fact that the dual game to a disjunctive  hierarchical game is conjunctive hierarchical \citep{gvozdeva13}.
\end{proof}

\begin{corollary}
There is a sequence of conjunctive hierarchical games whose codimensions grow exponentially in the number of players. 
\end{corollary}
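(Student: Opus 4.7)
The plan is to combine Theorem~\ref{OS2} with the Freixas--Marciniak theorem and the fact that duality interchanges disjunctive and conjunctive hierarchical games. Concretely, I would take the sequence of disjunctive hierarchical games $H_m=H_\exists(P,{\bf k})$ from Lemma~\ref{OS3} with $k$ fixed (say $k=2$) and $m$ growing. For each such $H_m$, the total number of players is $(2k+1)m$, which is linear in $m$, while Lemma~\ref{OS3} gives $\dim(H_m)\ge k^{m-1}$, which is exponential in $m$ and therefore exponential in the number of players.

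Next I would form the dual games $H_m^\ast$. By the result of \cite{gvozdeva13} cited in the previous corollary, the dual of a disjunctive hierarchical game is a conjunctive hierarchical game, so $H_m^\ast$ is conjunctive hierarchical on the same set of players. Applying the Freixas--Marciniak theorem, we obtain
\[
\text{codim}(H_m^\ast)=\dim(H_m)\ge k^{m-1},
\]
so the sequence $\{H_m^\ast\}$ is a sequence of conjunctive hierarchical games on $(2k+1)m$ players whose codimensions grow exponentially in the number of players.

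There is essentially no obstacle here: the only points to verify are that duality preserves the set of players (clear from the definition $W^\ast=\{P\setminus X\mid X\in L\}$) and that the class of hierarchical games is closed under duality in the stated sense, which is cited from \cite{gvozdeva13}. The heavy lifting has already been done in Lemma~\ref{OS3} and Theorem~\ref{OS2}, and Freixas--Marciniak transports the growth from dimension of the primal to codimension of the dual without loss.
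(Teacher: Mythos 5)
Your proposal is correct and is exactly the argument the paper intends: take the disjunctive hierarchical games of Lemma~\ref{OS3} with $k$ fixed and $m$ growing, pass to their duals (which are conjunctive hierarchical by the cited result of \cite{gvozdeva13}), and apply the Freixas--Marciniak theorem to transfer the exponential lower bound on dimension to codimension. The paper leaves this proof implicit, but your reconstruction matches it in every essential respect.
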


\section{Problems with the concept of dimension}

There are several worrying properties of the concept of dimension. One of these can be observed from the results of this paper, namely, that the dimension is not preserved under duality. Indeed, in \cite{gvozdeva13} it was proved that the dual of a  disjunctive hierarchical games is a conjunctive hierarchical games and vice versa.  However, Theorems~\ref{OS2} and~\ref{OS} show that the dimensions of games in these classes are very different. Another problem with this concept is illustrated in the following example. \par\smallskip

{\bf Motivating example.}\footnote{The example we present in this section is the result of an email discussion with Bill Zwicker.} It shows that when one represents a complete game $G$ as an intersection of weighted games, it may be impossible to choose the weightings in a way that faithfully represents the desirability order $\succeq_G$ in $G$. 

Let  $P=P_1\cup P_2$ with $|P_1|=2$ and $|P_2|=5$. Consider a disjunctive hierarchical game $H=H_\exists(P, {\bf k})$ with ${\bf k}=(2,5)$. Frstly, let us consider $P$ as a multiset $\{1^2,2^5\}$, i.e., consisting of two  identical players of type 1 and five of type 2. We ask if it is possible to find games $\row Hk$, where in the $i$th game 
weight $w_i(1)$ is assigned to all players of the first type, weight $w_i(2)]$ to players of the second type and the threshold is $q_i$.

Considering classes of equivalent coalitions of $H$ as submultisets,  we can graphically represent them as points in $\mathbb{R}$ representing the coalition $\{1^x,2^y\}$ as point $(x,y)$. Thus, 
 the minimal winning ones are $\{1^2\}$, $\{1,2^4\}$, $\{2^5\}$ will be represented as points $(2,0)$, $(1,4)$, $(0,5)$. We can graphically depict them on the following diagram where the area of winning coalitions is presented in grey.\par\smallskip

\begin{center} 
\includegraphics[width=4.5cm]{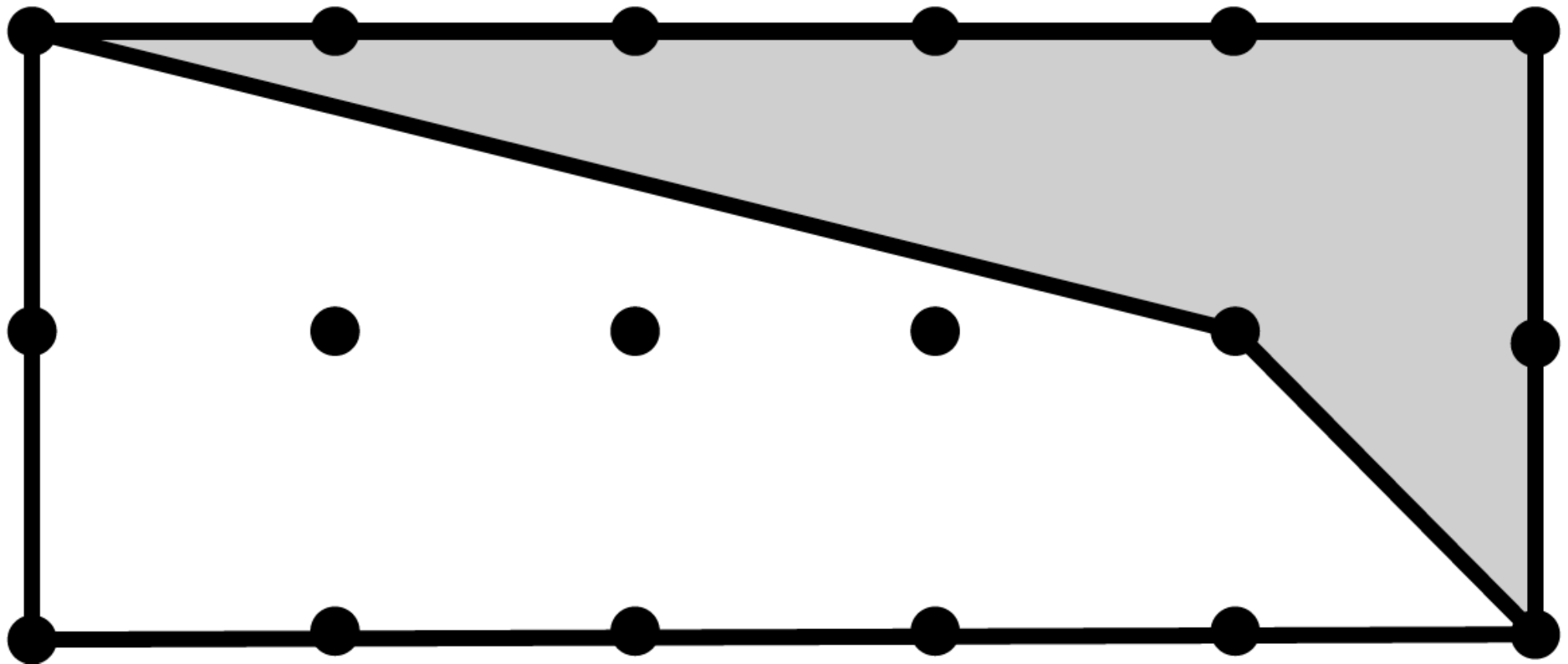}
\end{center}
\begin{center}
{\em Figure 1: Vertical axis $x$ shows the number of players of the first type and horizontal axis $y$ of the second.}
\end{center}
If it was possible to find weight functions $w_1, w_2$ and thresholds $q_1,q_2$, then The set of winning coalitions of $H$ will be convex.
This grey area is not convex; its convex hull also contains the point  corresponding to the only maximal losing coalition $\{1,2^3\}$. This means that the games $\row Hk$ cannot be found. What can be done?

Let $P_1=\{b_1,b_2\}$ and $P_2=\{c_1, c_2,c_3,c_4,c_5\}$. Consider the two following representations:\par\smallskip
{\bf First representation.}
%
We define two weighted games $G_1=(P,W_1)$ and $G_2=(P,W_2)$ as follows:
\begin{align*}
w_1(b_1) &= 4,\  w_1(b_2) = 1.1,\\
\forall_{j\in [5]} \ w_1(c_j) &= 1\ \text{and} \  \text{quota $= 5$}.
\end{align*}
and
\begin{align*}
w_1(b_1) &= 1.1,\  w_1(b_2) = 4,\\
\forall_{j\in [5]} \ w_1(c_j) &= 1\ \text{and} \  \text{quota $= 5$}.
\end{align*}
Obviously, $H=G_1\cap G_2$ and, since $H$ is not weighted, we have $\dim H=2$.\par\smallskip
{\bf Second representation.}
%
%
The above representation failed to represent  the `equivalence' part of $\sim_H$ for the players of $P_1$, but did succeed with the $P_2$ players (in that they did get equal weight for each of the two weightings).

  Next, we look at a representation that switches roles: it similarly represents the strict part of $\sim_H$  for all players, and it gives the $P_1$ players equal weight for each of the weightings (but the five players of $P_2$ get different weights). For every subset $X\subset [5]$ such that $|X|=3$ we define a game $G_X=(P,W_X)$ by
\begin{align*}
&w_X(b_1) = w_X(b_2) = 3,\\
&\forall_{i\in X}\ w_X(c_i) = 2,\  \forall_{i\notin X}\ w_X(c_i) = 0,\\
&quota = 6
\end{align*}
There are 10 such sets $X$ of cardinality 3, and thus 10 weighted games, and it is easy to show that their intersection is the hierarchical game $H$.  The nice property of this particular vector of weightings is that it respects the `strict' part $\sim_H$ of the individual desirability order of $H$ in the following sense: $x \prec_H y$ iff $w_X(x) < w_X(y)$ holds for each three-element subset $X$ of $[5]$.

This example explains why the lower bound in the theorem of Freixas and Puente (Theorem~2) had to be reproved. In their proof of the lower bound they allowed only weighted games that assign equal weights to players who are equivalent in the original game. We see that this was not sufficient to claim that the lower bound in their theorem holds.\par\smallskip


{\bf Boolean dimension of simple games.} \par\medskip

Boolean dimension of a simple game was introduced in \cite{faliszewski09}. Let $ \Phi=\{p,q,\ldots \}$ be a set of propositional variables and let $\mathcal{L}$ denote the set of (well-formed)
formulas of the first-order propositional logic over $\Phi$ containing only logical connectives $\wedge$ and $\vee$\footnote{In paper by \cite{faliszewski09}  negations were also allowed, however, the authors of that paper considered also non-monotonic simple games which we do not consider.}. For a formula $\phi\in \mathcal{L}$ let $|\phi |$ be the number of variables used to express $\phi $. Suppose also $\top$ is a tautology and $\bot$ is a contradiction. Let $G_i=(P,W_i)$, $i=1,\ldots,q$, be simple games with the same set of players $P$. We will define the game $G=(P,W)$ by setting for a coalition $C\subseteq P$
\[
C\in W := \phi (C\in W_1,\ldots, C\in W_q) =\top.
\]
We will denote this game $\phi (G_1,\ldots,G_q)$.
We illustrate this definition with the following simple games which plays an important role in the theory of secret sharing \citep{beimel:360}. They are called there tripartite.

\begin{example}
\label{extrip}
Let ${\bf n}=(n_1,n_2,n_3)$ and ${\bf k}=(k_1,k_2,k_3)$, where $n_1,n_2,n_3$ and $k_1,k_2,k_3$ are positive integers. The game $\Delta_1({\bf n},{\bf k})$ is defined on the set $P=P_1\cup P_2\cup P_3$ which is a union of disjoint sets $P_1,P_2,P_3$ of cardinalities $n_1,n_2,n_3$, respectively,. A coalition $C=C_1\cup C_2\cup C_3$, where $C_i\subseteq P_i$, $i=1,2,3$, is winning iff
\[
 (|C_1|\ge k_1) \vee [(|C_1|+|C_2|\ge k_2)\wedge (|C_1|+|C_2|+|C_3|\ge k_3)] = \top,
\]
where $k_1<k_3,\quad k_2<k_3,\quad n_1 \geq k_1,\quad n_2 >k_2- k_1$ and $ n_3> k_3-k_2.$ Obviously, it is organised as $G_1\vee (G_2\wedge G_3)$, where $G_1,G_2,G_3$ are weighted games. 
\end{example}

\begin{definition}
Let $G$ be a simple game. The smallest positive integer $d$ such that $G$ can be represented as $G=\phi (G_1,\ldots,G_n)$, where $\row Gn$ are weighted simple games, and $|\phi |=d$ is called the Boolean dimension of $G$.
\end{definition}

The Boolean dimension of the game $\Delta_1({\bf n},{\bf k})$ from Example~\ref{extrip} is obviously 3 while its classical dimension may (and certainly will) depend on the parameters ${\bf n}$ and ${\bf k}$. The revised voting rules of the Council of the European Union also has Boolean dimension~3.

Apart from a better ability to reflect the descriptive complexity of games, the Boolean dimension has some nice properties absent in the classical dimension.

\begin{proposition}
The Boolean dimension of a simple game is equal to the Boolean dimension of its dual.
\end{proposition}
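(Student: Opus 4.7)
The plan is to reduce the statement to two elementary facts about simple games: (i) the class of weighted simple games is closed under duality, and (ii) the duality operator satisfies De Morgan-type identities
\[
(G_1\wedge G_2)^* = G_1^*\vee G_2^*\quad\text{and}\quad (G_1\vee G_2)^* = G_1^*\wedge G_2^*.
\]
Together with the involutivity $(G^*)^*=G$, these will allow me to transform any Boolean expression for $G$ into one of the same size for $G^*$ by dualising every subformula.

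For (i), I would observe that if $G=[q;w_1,\ldots,w_n]$ has total weight $W=\sum_i w_i$, then unwinding the definition of the dual yields $X\in W(G^*)$ iff $\sum_{i\in X}w_i>W-q$. Choosing $\delta>0$ strictly smaller than the smallest positive gap between achievable weight-sums (only finitely many occur) converts the strict inequality into a non-strict one, and the voting representation $[W-q+\delta;w_1,\ldots,w_n]$ realises $G^*$ as a weighted game. Fact (ii) is a one-line check from the definitions of intersection, union, and dual: $X\in(W_1\cap W_2)^*$ iff $P\setminus X\notin W_1$ or $P\setminus X\notin W_2$ iff $X\in W_1^*\cup W_2^*$, and symmetrically for the other identity.

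With (i) and (ii) in hand, the substantive step is a structural induction on the parse tree of the formula witnessing the Boolean dimension. Suppose $G=\phi(G_1,\ldots,G_q)$ with each $G_i$ weighted and $|\phi|=d$, and let $\phi^*$ be the formula obtained from $\phi$ by swapping every occurrence of $\wedge$ with $\vee$. Induction, applying (ii) at each internal node of the tree, yields
\[
G^* = \phi^*(G_1^*,\ldots,G_q^*).
\]
By (i) each $G_i^*$ is weighted, and since swapping connectives does not alter the set of variables used, $|\phi^*|=|\phi|=d$. Hence the Boolean dimension of $G^*$ is at most that of $G$; applying the same argument to $G^*$ and using $(G^*)^*=G$ gives the reverse inequality. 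There is no serious obstacle here: the only mild care required is the boundary handling in fact (i), resolved by the choice of $\delta$ (or equivalently by rescaling to integer weights and shifting the quota by one).
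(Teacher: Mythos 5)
Your proof is correct and follows essentially the same route as the paper: the De Morgan identities for duals of intersections and unions (which the paper cites from Taylor and Zwicker rather than verifying), combined with closure of weighted games under duality and a dualisation of the whole formula tree. You simply make explicit the induction and the weighted-dual fact that the paper leaves implicit in the phrase ``which imply the statement.''
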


\begin{proof}
Let $G_i=(P,W_i)$, $i=1,2$, be a simple game and $G_i^\ast =(P,W_i^\ast)$, $i=1,2$, be their dual games. \cite{taylor99} (see, e.g., Proposition~1.4.3) showed that the following two de Morgan laws are satisfied
\begin{align*}
(W_1\cup W_2)^\ast &= W_1^\ast \cap W_2^\ast,\\
(W_1\cap W_2)^\ast &= W_1^\ast \cup W_2^\ast.
\end{align*}
From here it immediately follows that 
\begin{align*}
(G_1\vee G_2)^\ast &= G_1^\ast \wedge G_2^\ast,\\
(G_1\wedge G_2)^\ast &= G_1^\ast \vee G_2^\ast,
\end{align*}
which imply the statement.
\end{proof}

However, some features of Boolean dimension are the same as the corresponding features of dimension. In particular, as is the case for dimension, in general the calculation of the exact Boolean dimension of a simple game is NP-hard~\citep{faliszewski09}.

\section{\!\!\!\! Conclusion and open questions}
\smallskip

We have answered a question of \cite{freixas08} by showing that the dimension of a complete simple game can grow exponentially in the number of players. The games used to demonstrate this are the disjunctive hierarchical games. We have discussed the pitfalls of the concept of dimension and found some disturbing features of it which led us to reproving and strengthening the theorem of Freixas and Puente about the range in which the dimension of a conjunctive hierarchical game may lie. This research prompts the following questions:
\begin{itemize}
\item Can anything be said about the Boolean dimension of a complete game?
\item Let $G=(P,W)$ be a complete game with equivalence classes for the desirability relation $\row Pm$ so that $P=P_1\cup\ldots, \cup P_m$. Then for each $i\in [m]$ does there exist a representation of $G$ as an intersection of weighted games such that each game in this representation assigns equal weights to players from $P_i$ and respects the strict  part of the desirability order $\succeq_G$?
\end{itemize}

\section{Acknowledgements}

Arkadii Slinko was  supported by the Royal Society of NZ  Marsden Fund UOA-254.  He also thanks Bill Zwicker for fruitful discussions of the concept of dimension.

\bibliographystyle{elsarticle-harv}
\bibliography{dimension}

\begin{thebibliography}{25}
\expandafter\ifx\csname natexlab\endcsname\relax\def\natexlab#1{#1}\fi
\expandafter\ifx\csname url\endcsname\relax
  \def\url#1{\texttt{#1}}\fi
\expandafter\ifx\csname urlprefix\endcsname\relax\def\urlprefix{URL }\fi

\bibitem[{Beimel et~al.(2008)Beimel, Tassa, and Weinreb}]{beimel:360}
Beimel, A., Tassa, T., Weinreb, E., 2008. Characterizing ideal weighted
  threshold secret sharing. SIAM J. Discrete Math. 22~(1), 360--397.

\bibitem[{Carreras and Freixas(1996)}]{carreras96}
Carreras, F., Freixas, J., 1996. Complete simple games. Mathematical Social
  Sciences 32~(2), 139--155.

\bibitem[{De{\u \i}neko and Woeginger(2006)}]{deineko06}
De{\u \i}neko, V.~G., Woeginger, G.~J., 2006. On the dimension of simple
  monotonic games. European Journal of Operational Research 170~(1), 315--318.

\bibitem[{Deng and Papadimitriou(1994)}]{Deng:1994}
Deng, X., Papadimitriou, C.~H., 1994. On the complexity of cooperative solution
  concepts. Math. Oper. Res. 19~(2), 257--266.

\bibitem[{Elgot(1961)}]{Elgot}
Elgot, C., 1961. Truth functions realizable by single threshold organs. In:
  Proceedings of the Second Annual Symposium on Switching Circuit Theory and
  Logical Design. AIEE, pp. 225--245.

\bibitem[{Elkind et~al.(2008)Elkind, Goldberg, Goldberg, and
  Wooldridge}]{elkind08}
Elkind, E., Goldberg, L.~A., Goldberg, P.~W., Wooldridge, M., 2008. On the
  dimensionality of voting games. In: Proceedings of the 23rd Conference on
  Artificial Intelligence. pp. 69--74.

\bibitem[{Faliszewski et~al.(2009)Faliszewski, Elkind, and
  Wooldridge}]{faliszewski09}
Faliszewski, P., Elkind, E., Wooldridge, M., 2009. Boolean combinations of
  weighted voting games. In: Proceedings of the 8th International Conference on
  Autonomous Agents and Multiagent Systems. pp. 185--192.

\bibitem[{Freixas(1997)}]{freixas97}
Freixas, J., 1997. Different ways to represent weighted majority games. Top
  5~(2), 201--211.

\bibitem[{Freixas(2004)}]{freixas04}
Freixas, J., 2004. The dimension for the european union council under the nice
  rules. European Journal of Operational Research 156, 415--419.

\bibitem[{Freixas and Marciniak(2009)}]{freixas09}
Freixas, J., Marciniak, D., 2009. A minimum dimensional class of simple games.
  Top 17~(2), 407--414.

\bibitem[{Freixas and Puente(2008)}]{freixas08}
Freixas, J., Puente, M.~A., 2008. Dimension of complete simple games with
  minimum. European Journal of Operational Research 188~(2), 555--568.

\bibitem[{Gvozdeva et~al.(2013)Gvozdeva, Hameed, and Slinko}]{gvozdeva13}
Gvozdeva, T., Hameed, A., Slinko, A., 2013. Weightedness and structural
  characterization of hierarchical simple games. Mathematical Social Sciences
  65~(3), 181--189.

\bibitem[{Hameed and Slinko(2015)}]{Hameed2015}
Hameed, A., Slinko, A., 2015. Roughly weighted hierarchical simple games.
  International Journal of Game Theory 44~(2), 295--319.

\bibitem[{Ieong and Shoham(2005)}]{Ieong:2005}
Ieong, S., Shoham, Y., 2005. Marginal contribution nets: A compact
  representation scheme for coalitional games. In: Proceedings of the 6th ACM
  Conference on Electronic Commerce. EC '05. ACM, New York, NY, USA, pp.
  193--202.

\bibitem[{Isbell(1958)}]{isbell58}
Isbell, J.~R., 1958. A class of simple games. Duke Mathematical Journal 25~(3),
  423--439.

\bibitem[{Kurz and Napel(2015)}]{Kurz2015}
Kurz, S., Napel, S., 2015. Dimension of the {L}isbon voting rules in the {EU}
  council: a challenge and new world record. Optimization Letters, 1--12.

\bibitem[{Muroga(1971)}]{muroga71}
Muroga, S., 1971. Threshold Logic and its Applications. Wiley, New York.

\bibitem[{{Olsen} et~al.(2016){Olsen}, {Kurz}, and {Molinero}}]{Olsen2016}
{Olsen}, M., {Kurz}, S., {Molinero}, X., 2016. {On the Construction of High
  Dimensional Simple Games}. ArXiv:1602.01581.

\bibitem[{Peleg(2002)}]{peleg2002game}
Peleg, B., 2002. Game-theoretic analysis of voting in committees. Handbook of
  social choice and welfare 1, 395--423.

\bibitem[{Ramamurthy(2012)}]{ramamurthy2012coherent}
Ramamurthy, K., 2012. Coherent structures and simple games. Vol.~6. Springer
  Science \& Business Media.

\bibitem[{Simmons(1990)}]{simmons90}
Simmons, G.~J., 1990. How to (really) share a secret. In: Proceedings of the
  8th Annual International Cryptology Conference on Advances in Cryptology.
  Springer-Verlag, pp. 390--448.

\bibitem[{Tassa(2007)}]{tassa07}
Tassa, T., 2007. Hierarchical threshold secret sharing. Journal of Cryptology
  20~(2), 237--264.

\bibitem[{Taylor and Zwicker(1993)}]{taylor93}
Taylor, A., Zwicker, W., 1993. Weighted voting, multicameral representation,
  and power. Games and Economic Behavior 5~(1), 170--181.

\bibitem[{Taylor and Zwicker(1999)}]{taylor99}
Taylor, A.~D., Zwicker, W.~S., 1999. Simple Games: Desirability Relations,
  Trading, Pseudoweightings. Princeton University Press.

\bibitem[{von Neumann and Morgenstern(1944)}]{vonneumann44}
von Neumann, J., Morgenstern, O., 1944. Theory of Games and Economic Behavior.
  Princeton {U}niversity {P}ress.

\end{thebibliography}


\end{document}